\newtheorem{prop}{Proposition}
\newtheorem{theorem}{Theorem}
\newtheorem{lemma}{Lemma}
\newtheorem{assumption}{Assumption}
\newtheorem{remark}{Remark}
\def\BibTeX{{\rm B\kern-.05em{\sc i\kern-.025em b}  \kern-.08em
    T\kern-.1667em\lower.7ex\hbox{E}\kern-.125emX}}
\journal{IFAC Journal of Systems and Control
}
\begin{document}

\begin{frontmatter}



\title{A Simulation Framework with Guaranteed Error Bounds for \\ 
Safety and Fuel-Efficiency Analysis of Vehicle Platoons\tnoteref{t1}} 


\tnotetext[t1]{This work was supported by JSPS KAKENHI Grant No.~JP23K03913.}
\author[sit]{Yuhao Chen}
\ead{am23102@shibaura-it.ac.jp}
\author[sit]{Ahmet Cetinkaya} 
\ead{ahmet@shibaura-it.ac.jp}

\affiliation[sit]{organization={Shibaura Institute of Technology},
            addressline={Toyosu 3-7-5}, 
            city={Koto-ku},
            postcode={135-8548}, 
            state={Tokyo},
            country={Japan}}

\begin{abstract}
Vehicle platooning is an important technology in modern transportation systems, offering significant improvements in highway traffic efficiency and fuel economy. Achieving coordinated behavior among vehicles in a platoon depends on wireless communication. However, packet losses in wireless communication can create critical safety issues when they occur together with sudden braking. In this paper, we propose a rigorous simulation-based method for studying such safety issues by analyzing the minimum inter-vehicle distance over time across control parameters that guarantee string stability. In particular, our method computes the exact distance at simulation time instants and guarantees that the change in distance between consecutive simulation time instants remains bounded. Therefore, the distances obtained at simulation times are representative of the continuous-time behavior, and the distances between those times can be accurately approximated. Our derivation relies on a lifted state representation and differential inequalities. For the proposed simulation method, we provide two approaches for selecting simulation times to ensure that the error in distance approximation remains within a given bound. We then extend our method to fuel-efficiency analysis, with guaranteed error bounds for calculating the average fuel savings of vehicles. Through an example involving a highway scenario with a merging lane, we demonstrate that among string-stable control parameter settings for a vehicle platoon, some perform better in terms of safety under simultaneous packet losses and sudden braking. We also identify control parameters that result in tradeoffs between safety and average fuel savings in a vehicle platoon.
\end{abstract}

\begin{keyword}
Vehicle Platoons, Networked Control, Packet Losses, Linear Systems, Simulation
\end{keyword}

\end{frontmatter}



\section{Introduction}
\label{sec:introduction}
A vehicle platoon is a group of vehicles that travel closely together in a coordinated formation and in the forward direction. Vehicle platoons are becoming more popular and are being implemented more frequently in real-world traffic \cite{volvo_demo_2018,daimler_tokyo_2018}. Their growing popularity is due to recognized benefits such as enhanced traffic efficiency and reduced fuel consumption \cite{hall2005,maiti2017,sidorenko2020,xiao2022}. By maintaining small inter-vehicle distances, platoons can significantly reduce aerodynamic resistance \cite{Martinez2024} and increase highway capacity.

Despite these advantages, the transition from theoretical models to real-world highway deployment introduces important safety challenges. Recent studies indicate that the safety of vehicle platoons is heavily influenced by dynamic environmental factors, such as road topography \cite{Gao2025} and stochastic interactions with human-driven trucks during ramp-merging maneuvers \cite{Sang2023}. In addition, the reliability of wireless connectivity creates new vulnerabilities: dense clustering of vehicles in a platoon can overload Roadside Units, causing network disruptions and packet losses \cite{Pandey2025}. These issues are particularly critical in scenarios involving sudden braking or other emergency maneuvers on highways, where communication failures can increase the risk of rear-end collisions.

In a vehicle platoon, coordinated motion is typically achieved through the wireless exchange of essential information, such as acceleration and velocity. Control systems that utilize this shared information enable automated driving, reduce the potential risks of human error \cite{hou2023,alam2014}, and are expected to improve traffic safety. The urgency of addressing safety concerns is underscored by real-world accident statistics. According to the Japan Trucking Association \cite{JapanTruckingAssociation2024}, rear-end collisions involving large vehicles traveling at a constant speed in a straight line were the leading cause of traffic accidents in Japan in 2024, accounting for 50.8\% of all accidents; vehicle-to-vehicle collisions represented 61.9\% of all highway accidents. These data suggest that the number of such accidents may be significantly reduced with the help of vehicle or truck platooning. Traffic safety, especially concerning large-size vehicles, has been investigated in several works (see, e.g., \cite{hall2005,liang2015,huang2019}), which show that automatic coordination can improve highway capacity, optimize platoon formation, and facilitate safe cooperative merging maneuvers.

From a control-theoretic perspective, the \emph{string stability} of a vehicle platoon is a critical concept for realizing the efficiency benefits of platooning \cite{stankovic2000}. A platoon is considered string stable if disturbances, such as speed fluctuations or changes in spacing, do not amplify as they propagate from the first vehicle to the last one. To achieve string stability, \cite{dolk2017} introduced an event-triggered control approach in which each vehicle communicates its desired acceleration to the vehicle that follows it, and showed that appropriate control parameters can guarantee string stability. A more recent work \cite{acciani2022} addressed a similar control problem but explicitly included the effects of packet losses on the vehicle platoon, thereby linking communication imperfections to the dynamic behavior of the platoon.

Packet losses and communication delays are critically important for the performance, safety, and reliability of vehicle platoons. Degradation of platoon performance under packet losses and communication delays has been considered in several works \cite{ploeg2015,sidorenko2020,liu2022,loi2024}. Recently, \cite{Shen2024} explored mean-square stability of vehicle platoons under packet losses by designing a control model and using graph theory to describe the information flow topology. Another study \cite{Wu2025} investigated whether a distributed controller could ensure mean-square stability when communication channels are subject to Markovian packet losses. The authors of \cite{Zhao2025} proposed methods to closely maintain desired spacing between vehicles in a real-vehicle platoon under communication delays and high-speed road scenarios. Moreover, vehicle platoons communicating over general ad-hoc networks were investigated by \cite{xiao2022}.

Communication issues are not the only source of critical challenges for vehicle platoons in real traffic. Another important challenge is the dynamic merging and splitting of platoons. The authors of \cite{Deng2023} proposed an optimization algorithm to determine an optimal merging schedule that balances traffic performance and computational efficiency. Furthermore, \cite{surur2024} investigated vehicle acceleration profiles and inter-vehicle distances during merging, while extendable platoons and splitting operations have also been explored (see, e.g., \cite{maiti2017}). For ramp-traffic scenarios, \cite{chen2025} discussed merging strategies for vehicle platoons based on an adaptive threshold for vehicle spacing. In such ramp and merging situations, some vehicles are required to brake, and in critical safety scenarios, sudden braking may be necessary.

In real-life situations where communication suffers from packet losses and vehicles must brake suddenly, safety becomes a critical concern even if a vehicle platoon is string stable. While there exist theoretical results that provide conditions for string stability, issues such as random packet losses and sudden braking are difficult to fully capture within the standard string stability framework. To assess the safety of a string-stable platoon under these critical conditions, extensive simulations are typically required. However, simulations can be computationally costly, especially in the presence of random packet losses, as many runs are needed to estimate safety-related quantities. At the same time, simulations must be accurate because choosing appropriate control parameters is crucial for ensuring safety. This point is the motivating factor for our study.

In this paper, we aim to develop a fast simulation approach with guaranteed accuracy properties. Our approach adapts the continuous-time vehicle platoon model from \cite{dolk2017} by expanding it to incorporate scenarios involving sudden braking and packet losses. In contrast to the work by \cite{dolk2017}, which focused on designing control parameters to guarantee string stability, our primary objective is to evaluate the resulting inter-vehicle distances for a given set of parameters under these critical conditions. A key element of our methodology is that the platoon’s state is evaluated at discrete simulation time instants, which are carefully chosen to ensure that the approximation error of inter-vehicle distances remains within a guaranteed bound at all times. More precisely, exact distances are calculated at simulation times and accurate approximations are provided in between those simulation times. Using this framework, we compute the minimum distance maintained by each vehicle throughout the simulation and define safety performance as the absolute minimum of all inter-vehicle distances. Our results demonstrate that certain control parameters provide superior safety performance. 

We note that our simulation approach is conceptually different from and complementary to iterative approaches like Runge-Kutta method used for initial value problems involving nonlinear differential equations \cite{griffiths2010numerical}. Such methods provide guarantees on simulation reliability for nonlinear systems for given fixed discretization intervals (i.e., simulation intervals). On the other hand, our approach is to choose simulation intervals so as to achieve a pre-defined level of approximation accuracy, while taking account of communication losses that affect the dynamics.  

From the view point of safety simulations, this paper extends our previous conference contribution \cite{chen2025sice} in several directions. First, we propose a new approach for selecting simulation time intervals that yields even faster simulations without sacrificing accuracy. Second, we provide a complete characterization of the braking model as an explicit function of time. This enables a fair comparison between the approach of \cite{chen2025sice} and the one presented in this paper. 

In addition to a faster simulation method, we also provide an extension that enables fuel-efficiency simulations with guaranteed error bounds. In particular, we use the aerodynamic drag models from \cite{barhoumi2025} and \cite{hussein2021} to quantify fuel consumption behavior of vehicles in a platoon. Since the calculations are done only at simulation times, there is an unavoidable error in average fuel saving computations that require integration over a time domain. We derive bounds on vehicle velocities to find an upper bound on these computational errors. Our average fuel saving calculations are illustrated on a vehicle platoon that faces packet losses. The results indicate that for certain control parameters there is a trade-off between average fuel saved and the absolute minimum of inter-vehicle distances, as in increasing average-fuel savings is possible at the cost of absolute minimum distance decreasing (potentially causing safety concerns). On the other hand we also find certain control parameters, where increasing average fuel savings is possible without causing a decrease in distances.

The main contributions of this paper are summarized as follows. First, we develop a simulation methodology for vehicle platoons under packet losses and sudden braking that provides rigorous error bounds on inter vehicle distance approximations at all times, not just at simulation instants. This is different from standard simulation approaches such as Runge-Kutta methods, which fix the discretization interval and bound the local truncation error, whereas our approach adaptively selects simulation intervals to meet the accuracy requirement. Second, we introduce a faster simulation method based on a lifted state representation that exploits the logarithmic matrix norm, yielding simulation time gaps an order of magnitude larger than our baseline approach while maintaining the same error bound. This approach is new relative to our conference version \cite{chen2025}. Third, we extend the simulation framework to quantify average fuel savings under aerodynamic drag models deriving explicit error bounds on the fuel saving approximation that are not available in prior work on platoon fuel efficiency.

We organize the remaining sections as follows. In Section~\ref{sec:Vehicle-Platoon-Dynamics}, we present the dynamical model of the vehicle platoon. We then describe our models for sudden braking and communication losses in Section~\ref{sec:Inter-vehicle-Distances-Under}. Following this, our proposed simulation approach for vehicle platoon dynamics is explained in Section~\ref{sec:Time-Discretized-Simulation-Appr}.  We then extend our simulation approach in Section~\ref{sec:FuelEfficiency} to analyze fuel efficiency of a vehicle platoon. We demonstrate the effectiveness of our approach through several simulation studies in Section~\ref{sec:Numerical-Example}. Finally, we conclude our paper in Section~\ref{sec:Conclusion}.

\textbf{Notation:} The symbols  ${\mathbb{R}}$, ${\mathbb{N}}_{0}$, and ${\mathbb{N}}$ represent the sets of real numbers, nonnegative integers, and positive integers, respectively. The notation $\|\cdot\|$ is used for the Euclidean norm and its corresponding induced submultiplicative matrix norm. The notation $\mu(\cdot)$ denotes the logarithmic matrix norm induced by the Euclidean norm, that is, $\mu(M)=\lambda_{\max}((M+M^\top)/2)$ for $M\in \mathbb{R}^{n\times n}$, where $\lambda_{\max}(\cdot)$ denotes the maximum eigenvalue. For any real number $a\in{\mathbb{R}}$, $\left\lfloor a\right\rfloor$ denotes the floor function, which yields the greatest integer less than or equal to $a$. Furthermore, $W(\cdot)$ denotes the Lambert $W$ function.

\begin{figure}[t]
\centering \includegraphics[width=1\columnwidth]{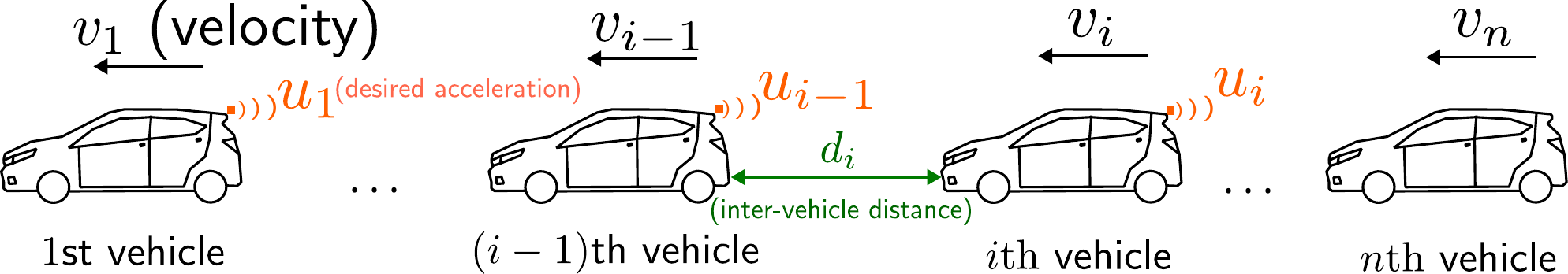}
\caption{Illustration of a vehicle platoon with $n$ vehicles.}
 \label{Figure-vehicle-platoon}
\end{figure}

\section{Vehicle Platoon Dynamics}
\label{sec:Vehicle-Platoon-Dynamics}
In this paper, we consider a vehicle platoon model similar to that
proposed by \cite{dolk2017}. This model incorporates a virtual reference
vehicle and $n\in{\mathbb{N}}$ actual vehicles in the platoon. Our
model differs from that of \cite{dolk2017} in that it incorporates
communication losses and a simultaneously occurring sudden brake, as these issues were not considered in the work of \cite{dolk2017}.
Figure~\ref{Figure-vehicle-platoon} shows an illustration of the
platoon setting that we consider.

\subsection{Dynamics of virtual reference vehicle and actual vehicles}

In the vehicle platoon, the virtual reference vehicle's motion is
described by
\begin{align}
\dot{p}_{0}(t) & =v_{0}(t),\label{eq:p0-def}\\
\dot{v}_{0}(t) & =a_{0}(t),\label{eq:v0-def}\\
\dot{a}_{0}(t) & =-\frac{1}{\tau_{{\mathrm{d}}}}a_{0}(t)+\frac{1}{\tau_{{\mathrm{d}}}}u_{0}(t),\quad t\geq0,\label{eq:a0-def}
\end{align}
where $p_{0}(t)\in{\mathbb{R}}$ and $v_{0}(t)\in{\mathbb{R}}$ respectively
denote the position and the velocity at time $t$; $a_{0}(t)\in{\mathbb{R}}$
and $u_{0}(t)\in{\mathbb{R}}$ respectively denote the acceleration
and the desired acceleration at time $t$; moreover, $\tau_{{\mathrm{d}}}>0$
is the so-called characteristic time constant that determines how
fast the vehicle can react. The $1$st actual vehicle follows the
virtual reference vehicle. 

For each $i\in\{1,2,\ldots,n\}$, the motion of the $i$th actual
vehicle is described by 
\begin{align}
\dot{e}_{i}(t) & =v_{i-1}(t)-v_{i}(t)-ha_{i}(t),\label{eq:e-def}\\
\dot{p}_{i}(t) & =v_{i}(t),\label{eq:pi-def}\\
\dot{v}_{i}(t) & =a_{i}(t), \label{eq:vi-def-first}\\
\dot{a}_{i}(t) & =-\frac{1}{\tau_{{\mathrm{d}}}}a_{i}(t)+\frac{1}{\tau_{{\mathrm{d}}}}u_{i}(t),\\
\dot{u}_{i}(t) & =-\frac{1}{h}u_{i}(t)+\frac{1}{h}\left(k_{{\mathrm{p}}}e_{i}(t)+k_{{\mathrm{d}}}\dot{e}_{i}(t)+\hat{u}_{i-1}(t)\right).\label{eq:ui-def}
\end{align}
In (\ref{eq:e-def}), the scalar $e_{i}(t)$ denotes the spacing error
defined through the \emph{constant time-gap policy} as 
\begin{align*}
e_{i}(t) & \triangleq(p_{i-1}(t)-p_{i}(t)-L_{i})-(r+hv_{i}(t)),
\end{align*}
where $L_{i}>0$ denotes the length of vehicle $i$, $r>0$ denotes
the standstill distance, and $h$ denotes the time gap. Notice that
the term $p_{i-1}(t)-p_{i}(t)-L_{i}$ represents the actual inter-vehicle
distance and $r+hv_{i}(t)$ represents the desired distance. In (\ref{eq:pi-def})--(\ref{eq:ui-def}),
$p_{i}(t),v_{i}(t),a_{i}(t),u_{i}(t)\in{\mathbb{R}}$ respectively denote
the position, the velocity, the acceleration, and the desired acceleration
of the $i$th vehicle at time $t$; moreover, $k_{{\mathrm{p}}},k_{{\mathrm{d}}}\in{\mathbb{R}}$
are control parameters, and $\hat{u}_{i-1}(t)\in{\mathbb{R}}$ denotes
the desired acceleration of the $(i-1)$th vehicle that is most recently
received by the $i$th vehicle over wireless channel. 

For sufficiently frequent communications between vehicles, \cite{dolk2017}
showed that there are $k_{{\mathrm{p}}}$ and $k_{{\mathrm{d}}}$ values
that guarantee string stability, which ensures that speed changes
and fluctuations in distance between vehicles do not amplify as they
propagate through the platoon. In this paper our goal is not to design
$k_{{\mathrm{p}}}$ and $k_{{\mathrm{d}}}$, but to verify safety of a
platoon when the communication is subject to packet losses and there
is a sudden brake.

We assume that each actual vehicle attempts to communicate with the
next vehicle periodically at times $s_{0},s_{1},s_{2},\ldots\in[0,\infty)$,
where $s_{0}=0$ and $s_{j+1}-s_{j}=T$ with $T>0$ denoting the communication
interval. We collect the communication time instants in the set ${\mathcal{S}}\triangleq\{s_{0},s_{1},s_{2},\ldots\}$.
Furthermore, we use $l_{0}^{i},l_{1}^{i},l_{2}^{i},\ldots\in\{0,1\}$
to indicate the communication failures at those times. Specifically,
$l_{j}^{i}=1$ indicates that the communication attempt of the $i$th
vehicle at time $s_{j}$ results in a failure and $l_{j}^{i}=0$ indicates
that it is successful. Assuming that the initial communication
attempts at time $s_{0}$ are successful, for each $i\in\{1,2,\ldots,n-1\}$,
$\hat{u}_{i}(t)$ can be characterized as 
\begin{align}
\hat{u}_{i}(s_{0}) & =u_{i}(s_{0}),\label{eq:hatu-begin}\\
\hat{u}_{i}(s_{j}) & =\begin{cases}
\hat{u}_{i}(s_{j-1}), & {\mathrm{if}}\,\,l_{j}^{i}=1\,\,({\mathrm{failure}}),\\
u_{i}(s_{j}), & {\mathrm{if}}\,\,l_{j}^{i}=0\,\,({\mathrm{success}}),
\end{cases}\,\,j\in{\mathbb{N}},\label{eq:hat-u-mid}\\
\hat{u}_{i}(t) & =\hat{u}_{i}(s_{j}),\quad t\in(s_{j},s_{j+1}),\quad j\in{\mathbb{N}}_{0}.\label{eq:ui-piecewise-constant}
\end{align}
Note that for $i=1$, the equation (\ref{eq:ui-def}) includes
the term $\hat{u}_{0}(t)$. Since this is the virtual reference vehicle's
desired acceleration, it does not require communication to be known
by the $1$st vehicle, and therefore, we set
\begin{align}
\hat{u}_{0}(t) & =u_{0}(t),\quad t\geq0.\label{eq:hatu-end}
\end{align}
\begin{remark} The equality in (\ref{eq:ui-piecewise-constant})
implies that for $i\in\{1,2,\ldots,n-1\}$, the function $\hat{u}_{i}\colon[0,\infty)\to{\mathbb{R}}$
is a piecewise-constant function of time. This point is utilized in
Section~\ref{sec:Time-Discretized-Simulation-Appr} to propose a
discrete-time approach for simulating the dynamics of the vehicle
platoon. \end{remark}
\subsection{The overall dynamics as a linear system}
\label{sec:TheOverallDynamics}
The overall dynamics of the vehicle platoon can be described as a
linear system of the form
\begin{align}
\dot{x}(t) & =A_{{\mathrm{c}}}x(t)+B_{{\mathrm{c}}}u(t).\label{eq:linear-system}
\end{align}
Here, the state vector $x(t)\in{\mathbb{R}}^{3+6n}$ is given by 
\begin{align}
x(t) & \triangleq\left[\begin{array}{ccccc}
x_{0}^{\top}(t) & x_{1}^{\top}(t) & x_{2}^{\top}(t) & \cdots & x_{n}^{\top}(t)\end{array}\right]^{\top},\label{eq:x-def}
\end{align}
where $x_{0}(t)\triangleq[p_{0}(t),v_{0}(t),a_{0}(t)]^{\top}$ and
\begin{align*}
x_{i}(t) & \triangleq\left[e_{i}(t),\dot{e}_{i}(t),p_{i}(t),v_{i}(t),a_{i}(t),u_{i}(t)\right]^{\top},
\end{align*}
for $i\in\{1,2,\ldots,n\}$. Moreover, the input vector $u(t)\in{\mathbb{R}}^{1+n}$
is given by 
\begin{align}
u(t) & \triangleq\left[u_{0}(t),\hat{u}_{0}(t),\hat{u}_{1}(t),\ldots,\hat{u}_{n-1}(t)\right]^{\top},\label{eq:u-def}
\end{align}
 where $u_{0}(t)$ is the desired acceleration for the virtual reference
vehicle and $\hat{u}_{i}(t)$, $i\in\{0,1,\ldots,n-1\}$, are given
by (\ref{eq:hatu-begin})--(\ref{eq:hatu-end}).

By (\ref{eq:p0-def})--(\ref{eq:ui-def}), the matrices $A_{{\mathrm{c}}}\in{\mathbb{R}}^{(3+6n)\times(3+6n)}$,
$B_{{\mathrm{c}}}\in{\mathbb{R}}^{(3+6n)\times(1+n)}$ can be given in
block-matrix form as 
\begin{align*}
A_{{\mathrm{c}}} & \triangleq\left[\begin{array}{@{}ccccc@{}}
M & 0 & 0 & \cdots & 0\\
G & N & 0 & \ddots & \vdots\\
0 & H & N & \ddots & 0\\
\vdots & \ddots & \ddots & \ddots & 0\\
0 & \cdots & 0 & H & N
\end{array}\right],\,\,
B_{{\mathrm{c}}} \triangleq\left[\begin{array}{@{}cccc@{}}
E & 0 & \cdots & 0\\
0 & F & \ddots & \vdots\\
\vdots & \ddots & \ddots & 0\\
0 & \cdots & 0 & F
\end{array}\right],
\end{align*}
where 
\begin{align*}
M & \triangleq\left[\begin{array}{ccc}
0 & 1 & 0\\
0 & 0 & 1\\
0 & 0 & -1/\tau_{{\mathrm{d}}}
\end{array}\right],\\
N & \triangleq\left[\begin{array}{cccccc}
0 & 0 & 0 & -1 & -h & 0\\
0 & 0 & 0 & 0 & h/\tau_{{\mathrm{d}}}-1 & -h/\tau_{{\mathrm{d}}}\\
0 & 0 & 0 & 1 & 0 & 0\\
0 & 0 & 0 & 0 & 1 & 0\\
0 & 0 & 0 & 0 & -1/\tau_{{\mathrm{d}}} & 1/\tau_{{\mathrm{d}}}\\
k_{{\mathrm{p}}}/h & k_{{\mathrm{d}}}/h & 0 & 0 & 0 & -1/h
\end{array}\right],\\
G & \triangleq\left[\begin{array}{ccc}
0 & 1 & 0\\
0 & 0 & 1\\
0 & 0 & 0\\
0 & 0 & 0\\
0 & 0 & 0\\
0 & 0 & 0
\end{array}\right],\,\,H\triangleq\left[\begin{array}{cccccc}
0 & 0 & 0 & 1 & 0 & 0\\
0 & 0 & 0 & 0 & 1 & 0\\
0 & 0 & 0 & 0 & 0 & 0\\
0 & 0 & 0 & 0 & 0 & 0\\
0 & 0 & 0 & 0 & 0 & 0\\
0 & 0 & 0 & 0 & 0 & 0
\end{array}\right],\\
E & \triangleq\left[\begin{array}{c}
0\\
0\\
1/\tau_{{\mathrm{d}}}
\end{array}\right],\,\,F\triangleq\left[\begin{array}{c}
0\\
0\\
0\\
0\\
0\\
1/h
\end{array}\right].
\end{align*}
In Section~\ref{sec:Time-Discretized-Simulation-Appr}, the continuous-time
linear dynamics in (\ref{eq:linear-system}) will  provide the basis
for a discrete-time approach for simulating the dynamics of the vehicle
platoon.

\begin{remark}
The linear vehicle platoon model adopted here is consistent with a well-established line of work in cooperative vehicle control \cite{dolk2017, acciani2022, stankovic2000} and is appropriate for the highway cruising regime that we consider. In highway conditions at near constant speed, the dominant nonlinearities in longitudinal vehicle dynamics arise from aerodynamic drag and powertrain saturation. The former is addressed separately in our fuel efficiency analysis in Section~\ref{sec:FuelEfficiency}, and the latter is implicitly captured through the deceleration parameter $\gamma$ in the braking model that we will explain in Section~\ref{sec:breakingmodel}. The linear structure enables closed form expressions for inter vehicle distances via matrix exponentials, which allows us to derive guaranteed error bounds in Section~\ref{sec:Time-Discretized-Simulation-Appr}. 
\end{remark}

\section{Inter-vehicle Distances Under Sudden Braking and Communication Losses}
\label{sec:Inter-vehicle-Distances-Under}

Our goal in this paper is to compute how distances between consecutive
vehicles change when the $1$st vehicle brakes suddenly. While string
stability ensures that sudden changes in the velocity do not propagate
in the platoon, a sudden brake may cause collisions in a platoon if
it happens in conjunction with communication losses in the wireless
channel used by the vehicles. When the leading vehicle brakes suddenly under packet losses, the following vehicles may not receive the warning in time. This  can lead to insufficient braking time, and in the worst case it may lead to vehicle collisions. 

\subsection{Inter-vehicle distances}

We can identify collisions through inter-vehicle distances. The inter-vehicle
distance in front of the $i$th vehicle is defined as 
\begin{align}
d_{i}(t) & \triangleq(p_{i-1}(t)-p_{i}(t)-L_{i}),\quad i\in\{2,\ldots,n\}.\label{eq:di-def}
\end{align}
 Notice that this definition accounts for the length of each vehicle.
We will use the solutions $x(t)$ of (\ref{eq:linear-system}) to
compute the inter-vehicle distances $d_{i}(t)$ as 
\begin{align}
d_{i}(t) & =q_{i}^{\top}x(t)-L_{i},\label{eq:di-equality}
\end{align}
where $q_{i}\in{\mathbb{R}}^{3+6n}$ is a vector with $j$th entry given
as 
\begin{align}
q_{i,j} & \triangleq\begin{cases}
-1, & {\mathrm{if}}\,\,j=6i+6,\\
1, & {\mathrm{if}}\,\,j=6i,\\
0, & {\mathrm{otherwise}},
\end{cases}\quad j\in\{1,2,\ldots,3+6n\}.\label{eq:vij-def}
\end{align}
 Notice that (\ref{eq:di-equality}) holds, since $q_{i}^{\top}x(t)=p_{i-1}(t)-p_{i}(t)$. 

To identify if any vehicle in the platoon gets close to a collision
in a given time interval $[0,t_{{\mathrm{end}}}]$ with $t_{\mathrm{end}}>0$, we also define the
minimum of minimum inter-vehicle distances as 
\begin{align}
d_{\min}^{*} & =\min_{i\in\{2,\ldots n\}}\min_{t\in[0,t_{{\mathrm{end}}}]}d_{i}(t).\label{eq:d-min-def}
\end{align}
We note that $d_{\min}^{*}=0$ indicates that a collision has occurred
somewhere in the platoon at some time between $t=0$ and $t=t_{{\mathrm{end}}}$.
In Section~\ref{sec:Time-Discretized-Simulation-Appr}, we will provide
a method to approximate $d_{\min}^{*}$ reliably through simulation. Before
we discuss this method, we first explain the braking
and the communication loss models that we consider.

\subsection{Braking model}

\label{sec:breakingmodel}

We model the braking by modifying the virtual reference vehicle's
desired acceleration $u_{0}(t)$. Notice that the virtual reference
vehicle's dynamics are used directly as a reference by the $1$st
vehicle. In particular, the information of $\hat{u}_{0}(t)=u_{0}(t)$
is directly available to the $1$st vehicle. 

The braking model that we consider is of the form 
\begin{equation} \label{eq:brake-model}
u_{0}(t) = \begin{cases}
0, & t<t_{\mathrm{brake}},\\
\max\{-\gamma,-\eta v_{0}(t)\}, & t\geq t_{\mathrm{brake}}.
\end{cases}
\end{equation}
 This braking model involves three parameters: the initial braking time $t_{\mathrm{brake}}>0$ and the deceleration parameters $\gamma>0$ and $\eta>0$. Since we consider forward motion, we have $v_{0}(t)\geq0$ for all $t\geq0$. Consequently, the desired acceleration satisfies $u_{0}(t)\leq0$ for $t\geq t_{\mathrm{brake}}$. The braking model's behavior changes based on $v_{0}(t)$. When $v_{0}(t)$ is sufficiently small such that $v_{0}(t) < \gamma/\eta$, the expression in \eqref{eq:brake-model} can be simplified to $\max\{-\gamma,-\eta v_{0}(t)\}=-\eta v_{0}(t)$. Hence, as the virtual reference vehicle's velocity approaches $0$, the desired acceleration $u_{0}(t)$ also approaches $0$. 

To further analyze the behavior resulting from the braking model in \eqref{eq:brake-model}, we provide a few technical results. Let $t^*>0$ be the time instant at which $u_0(t)$ changes from $-\gamma$ to $-\eta v_0(t)$. Specifically,
\begin{align}
t^* \triangleq \min\{t \geq t_\mathrm{brake}:-\eta v_0(t) \geq -\gamma \}. \label{eq:tstar definiton}
\end{align}
The following result characterizes the value of $t^*$. 
 \begin{lemma} \label{Lemma-tstar}
  Consider the time constant $t^*$ defined in (\ref{eq:tstar definiton}) in relation to the braking model in \eqref{eq:brake-model}. If $v_0(t_\mathrm{brake}) > \gamma/\eta$, then
 \begin{align}
 t^* &= -\tau_\mathrm{d}\beta_1 + \tau_\mathrm{d}W \left(\frac{-(a_0 (t_\mathrm{brake})+\gamma)e^{\frac{t_\mathrm{brake}}{\tau_\mathrm{d}}+\beta_1}}{\gamma}\right), \label{eq:tstar calculation}
 \end{align}
 where $W$ denotes the Lambert $W$ function and 
 \begin{align*}
 \beta_1 \triangleq {\frac{\gamma/\eta-v_0(t_\mathrm{brake})-\tau_\mathrm{d}a_0(t_\mathrm{brake})-\gamma\tau_\mathrm{d}-\gamma t_\mathrm{brake}}{\gamma\tau_\mathrm{d}}}.
 \end{align*}
 \end{lemma}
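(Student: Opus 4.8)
The plan is to exploit the fact that on the interval $[t_{\mathrm{brake}},t^*)$ the braking input is constant, so the reference vehicle obeys a scalar linear time-invariant system that can be solved in closed form. First I would pin down the braking law on this interval: by the definition (\ref{eq:tstar definiton}) of $t^*$, every $t\in[t_{\mathrm{brake}},t^*)$ satisfies $-\eta v_0(t)<-\gamma$, equivalently $v_0(t)>\gamma/\eta$, so that $\max\{-\gamma,-\eta v_0(t)\}=-\gamma$ and hence $u_0(t)=-\gamma$. Since $v_0(t_{\mathrm{brake}})>\gamma/\eta$ by hypothesis and, under $u_0=-\gamma$, the acceleration relaxes toward $-\gamma<0$, the velocity decreases until it reaches the threshold $\gamma/\eta$; this guarantees that the crossing time is well-defined and that $v_0(t^*)=\gamma/\eta$ holds by continuity.

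Next I would solve the dynamics explicitly. On $[t_{\mathrm{brake}},t^*)$, equation (\ref{eq:a0-def}) becomes $\dot a_0(t)=-\frac{1}{\tau_{\mathrm{d}}}a_0(t)-\frac{\gamma}{\tau_{\mathrm{d}}}$, whose solution is
\[
a_0(t)=-\gamma+\bigl(a_0(t_{\mathrm{brake}})+\gamma\bigr)e^{-(t-t_{\mathrm{brake}})/\tau_{\mathrm{d}}}.
\]
Integrating $\dot v_0=a_0$ from (\ref{eq:v0-def}) over $[t_{\mathrm{brake}},t]$ then gives
\[
v_0(t)=v_0(t_{\mathrm{brake}})-\gamma(t-t_{\mathrm{brake}})+\bigl(a_0(t_{\mathrm{brake}})+\gamma\bigr)\tau_{\mathrm{d}}\bigl(1-e^{-(t-t_{\mathrm{brake}})/\tau_{\mathrm{d}}}\bigr).
\]

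The crucial step is to impose the crossing condition $v_0(t^*)=\gamma/\eta$, which characterizes $t^*$ and produces a transcendental equation containing both a linear term $-\gamma(t^*-t_{\mathrm{brake}})$ and an exponential term $e^{-(t^*-t_{\mathrm{brake}})/\tau_{\mathrm{d}}}$. To reduce this to Lambert-$W$ form I would introduce the shifted variable $y=(t^*-t_{\mathrm{brake}})/\tau_{\mathrm{d}}+c$ and choose $c$ precisely so that every term not multiplying the exponential cancels the right-hand side $\gamma/\eta$; this forces $c=\beta_1+t_{\mathrm{brake}}/\tau_{\mathrm{d}}$, with $\beta_1$ exactly as in the statement. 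The equation then collapses to $y\,e^{y}=-\bigl(a_0(t_{\mathrm{brake}})+\gamma\bigr)e^{c}/\gamma$, so that $y=W\!\bigl(-(a_0(t_{\mathrm{brake}})+\gamma)e^{t_{\mathrm{brake}}/\tau_{\mathrm{d}}+\beta_1}/\gamma\bigr)$. Unwinding $t^*=\tau_{\mathrm{d}}(y-c)+t_{\mathrm{brake}}=-\tau_{\mathrm{d}}\beta_1+\tau_{\mathrm{d}}W(\cdot)$ yields (\ref{eq:tstar calculation}).

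The main obstacle I anticipate is twofold. First, carrying out the substitution so that the constant term vanishes and the resulting shift coincides exactly with $\beta_1$ is the computational heart of the argument and must be done carefully. Second, because the Lambert $W$ function is multivalued on $(-1/e,0)$ and the argument $-(a_0(t_{\mathrm{brake}})+\gamma)e^{t_{\mathrm{brake}}/\tau_{\mathrm{d}}+\beta_1}/\gamma$ is negative whenever $a_0(t_{\mathrm{brake}})>-\gamma$, one must verify that the selected branch returns the \emph{first} crossing time $t^*\geq t_{\mathrm{brake}}$ rather than a spurious root. I expect that establishing monotonic decrease of $v_0$ on the braking phase (or directly controlling the sign of the argument relative to $-1/e$) is what resolves the branch selection and completes the proof.
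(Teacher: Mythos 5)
Your proposal follows essentially the same route as the paper's proof: on $[t_\mathrm{brake},t^*]$ the input is constant at $-\gamma$, so you solve $\dot a_0=-(a_0+\gamma)/\tau_\mathrm{d}$, integrate to get $v_0(t)$, impose the crossing condition $v_0(t^*)=\gamma/\eta$, and your shifted variable $y=(t^*-t_\mathrm{brake})/\tau_\mathrm{d}+c$ with $c=\beta_1+t_\mathrm{brake}/\tau_\mathrm{d}$ is exactly the paper's substitution $t^*/\tau_\mathrm{d}+\beta_1$, yielding the identical Lambert-$W$ equation. Your extra point about branch selection (the argument of $W$ is indeed negative whenever $a_0(t_\mathrm{brake})>-\gamma$, and the principal branch is the one giving the crossing after $t_\mathrm{brake}$) is a legitimate refinement that the paper's proof silently passes over, not a divergence in method.
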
 \medskip
 \begin{proof}
First of all, the condition $v_0(t_\mathrm{brake}) > \gamma/\eta$ implies that $t^* > t_\mathrm{brake}$. Therefore, for $t\in[t_\mathrm{brake} ,t^*]$, (\ref{eq:a0-def}) becomes
\begin{align}
\dot{a}_0(t) = -\frac{1}{\tau_\mathrm{d}} a_0(t) -\frac{\gamma}{\tau_\mathrm{d}}, \label{eq:a0-def2}
\end{align}
which implies that
\begin{align}
a_0(t) = (a_0(t_\mathrm{brake})+\gamma)e^{\frac{t_\mathrm{brake}-t}{\tau_\mathrm{d}}}-\gamma. \label{eq:a0-def2intsim}
\end{align}
Furthermore, using (\ref{eq:v0-def}), we obtain that
\begin{align}
v_0(t) = v_0(t_\mathrm{brake}) + \int_{t_\mathrm{brake}}^{t}a_0(\tau)\mathrm{d}\tau. \label{eq:v0-def2}
\end{align}
By substituting (\ref{eq:a0-def2intsim}) into (\ref{eq:v0-def2}),  for  $t\in[t_\mathrm{brake} ,t^*]$, we get
\begin{align}
v_0(t) &= v_0 (t_\mathrm{brake}) + \tau_\mathrm{d}(a_0(t_\mathrm{brake})+\gamma)(1-e^{\frac{t_\mathrm{brake}-t}{\tau_\mathrm{d}}}) \nonumber \\
&\quad -\gamma(t-t_\mathrm{brake}). \label{eq:v0t equation}
\end{align}
For $t=t^*$, we have $v_0(t) = v_0(t^*) = \frac{\gamma}{\eta}$. Therefore, (\ref{eq:v0t equation}) implies
\begin{align}
\frac{\gamma}{\eta}  &= v_0 (t_\mathrm{brake}) + \tau_\mathrm{d}(a_0(t_\mathrm{brake})+\gamma)(1-e^{\frac{t_\mathrm{brake}-t^*}{\tau_\mathrm{d}}}) \nonumber \\
&\quad -\gamma(t^*-t_\mathrm{brake}), \label{eq:v0tstar equation}
\end{align}
which in turn implies
\begin{align}
\frac{t^*}{\tau_\mathrm{d}} + \beta_1 = -\frac{a_0(t_\mathrm{brake})+\gamma}{\gamma}. \label{eq:tstar beta1}
\end{align}
By multiplying both sides by $e^{\frac{t^*}{\tau_\mathrm{d}}+\beta_1}$, we obtain
\begin{align}
\left(\frac{t^*}{\tau_\mathrm{d}} + \beta_1\right)e^{\frac{t^*}{\tau_\mathrm{d}}+\beta_1} = \left(\frac{-(a_0(t_\mathrm{brake})+\gamma)e^{\frac{t_\mathrm{brake}}{\tau_\mathrm{d}}+\beta_1}}{\gamma}\right). \label{eq:last step to lamber w function}
\end{align}
Finally, (\ref{eq:tstar calculation}) follows from (\ref{eq:last step to lamber w function}) and the definition of Lambert $W$ function.
 \end{proof}

Notice that by the definition of $t^*$, we have $u_0(t)=-\gamma$ for $t\in [t_{\mathrm{brake}}, t^*)$. Furthermore, $u_0(t)=-\eta v_0(t)$ for $t\geq t^*$. Therefore, it follows from \eqref{eq:p0-def}--\eqref{eq:a0-def} that for $t\geq t^*$, the dynamics of the virtual reference vehicle's velocity $v_0(t)$ follows the second-order linear homogeneous equation given by 
\begin{align}
    \tau_\mathrm{d}\ddot{v}_0(t)+\dot{v}_0(t)+\eta v_0(t) = 0. \label{eq:governing eq}
\end{align}
This equation allows us to obtain $v_0(t)$ as an explicit function of time for $t\geq t^*$, and thus we can obtain $u_0(t)=-\eta v_0(t)$ as an explicit function of time.

We remark that the characteristic equation associated with \eqref{eq:governing eq} is given as $\tau_\mathrm{d} \lambda^2 + \lambda + \eta = 0$. The solutions to \eqref{eq:governing eq} depend on the roots of this characteristic equation. Following \cite{ode1985}, we can consider three cases for the solutions: 1) $\eta < 1/(4\tau_{\mathrm{d}})$ (overdamped case), 2) $\eta = 1/(4\tau_{\mathrm{d}})$ (critically damped case), and 3) $\eta > 1 / (4\tau_{\mathrm d})$ (underdamped case). For the underdamped case, the discriminant of the characteristic equation becomes negative. This results in a pair of complex conjugate roots, leading to an oscillatory solution for $v_0(t)$. This is mathematically valid, but the oscillatory behavior is generally considered undesirable for a vehicle's braking system. It would imply that the vehicle's deceleration rate fluctuates, potentially causing dangerous situations during braking, which is suboptimal for stable and safe vehicle platoon control. Therefore, in the following result, we focus only on overdamped and critically damped cases, that is, $\eta \leq 1/(4\tau_{\mathrm{d}})$.

\begin{prop}\label{prop:dynamic behavior} Consider the time constant $t^*$ defined in (\ref{eq:tstar definiton}) in relation to the braking model in \eqref{eq:brake-model}. If $v_0(t_{\mathrm{brake}})>\gamma/\eta$ and $\eta \leq 1/(4\tau_\mathrm{d})$, then the behavior of the desired acceleration $u_0(t)$ for $t \geq t^*$ is determined by the relationship between the deceleration parameter $\eta$ and the time constant $\tau_{\mathrm{d}}$ as follows:
    \begin{enumerate}
        \item If $\eta < \frac{1}{4\tau_\mathrm{d}}$, then for $t\geq t^*$, we have
        $$u_0(t) = - \eta \left(\beta_2 e^{\lambda_1(t-t^*)} + \beta_3 e^{\lambda_2(t-t^*)}\right),$$
        where $\lambda_1= \frac{-1+\sqrt{1-4\eta\tau_\mathrm{d}}}{2\tau_\mathrm{d}}, \lambda_2= \frac{-1-\sqrt{1-4\eta\tau_\mathrm{d}}}{2\tau_\mathrm{d}}$, and $\beta_2 = \frac{a_0(t^*)-\lambda_2\frac{\gamma}{\eta}}{\lambda_1-\lambda_2}$, $\beta_3=\frac{\lambda_1\frac{\gamma}{\eta}-a_0(t^*)}{\lambda_1-\lambda_2}$. 

        \item If $\eta = \frac{1}{4\tau_\mathrm{d}}$, then for $t\geq t^*$, we have
        $$u_0(t) = - \eta \cdot e^{\lambda_3(t-t^*)} \left(\beta_4 + \beta_5(t-t^*)\right),$$
        where $\lambda_3 = -\frac{1}{2\tau_\mathrm{d}}$, $\beta_4 = \frac{\gamma}{\eta}$ and $\beta_5 =a_0(t^*)+2\gamma$.
    \end{enumerate}
\end{prop}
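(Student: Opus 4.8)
The plan is to exploit the governing second-order linear homogeneous ODE in \eqref{eq:governing eq}, which the excerpt has already derived for $t \geq t^*$, and to solve it explicitly under the appropriate initial conditions. Since $u_0(t) = -\eta v_0(t)$ for $t \geq t^*$, it suffices to determine $v_0(t)$ as an explicit function of time and then multiply by $-\eta$; the case split in the statement is nothing more than the standard case split for the roots of the characteristic equation, so the whole argument reduces to a clean initial-value problem.

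The crucial inputs are the two initial conditions at $t = t^*$. First, from the definition of $t^*$ in \eqref{eq:tstar definiton} we have $v_0(t^*) = \gamma/\eta$; second, from \eqref{eq:v0-def} we have $\dot{v}_0(t^*) = a_0(t^*)$. With these in hand I would write the characteristic equation $\tau_\mathrm{d}\lambda^2 + \lambda + \eta = 0$ and determine its roots in each regime. For the overdamped case $\eta < 1/(4\tau_\mathrm{d})$, the discriminant $1 - 4\eta\tau_\mathrm{d}$ is positive, yielding the two distinct real roots $\lambda_1,\lambda_2$ and the general solution $v_0(t) = \beta_2 e^{\lambda_1(t-t^*)} + \beta_3 e^{\lambda_2(t-t^*)}$. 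For the critically damped case $\eta = 1/(4\tau_\mathrm{d})$, the discriminant vanishes, giving the repeated root $\lambda_3 = -1/(2\tau_\mathrm{d})$ and the general solution $v_0(t) = e^{\lambda_3(t-t^*)}\bigl(\beta_4 + \beta_5(t-t^*)\bigr)$.

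Next, I would substitute the two initial conditions into each general solution to pin down the constants. In the overdamped case, evaluating $v_0$ and $\dot{v}_0$ at $t = t^*$ produces the linear system $\beta_2 + \beta_3 = \gamma/\eta$ and $\lambda_1\beta_2 + \lambda_2\beta_3 = a_0(t^*)$; solving this $2\times 2$ system yields exactly the stated $\beta_2$ and $\beta_3$. In the critically damped case, evaluating $v_0$ at $t = t^*$ immediately gives $\beta_4 = \gamma/\eta$, while differentiating and evaluating $\dot{v}_0$ at $t = t^*$ gives $\lambda_3\beta_4 + \beta_5 = a_0(t^*)$, hence $\beta_5 = a_0(t^*) - \lambda_3\beta_4$.

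The only step beyond pure bookkeeping is the final simplification in the critically damped case: substituting $\lambda_3 = -1/(2\tau_\mathrm{d})$ together with $\eta = 1/(4\tau_\mathrm{d})$ (so that $\gamma/\eta = 4\gamma\tau_\mathrm{d}$) into $\beta_5 = a_0(t^*) - \lambda_3\beta_4$ collapses the expression to $\beta_5 = a_0(t^*) + 2\gamma$, matching the claim. I do not expect a genuine obstacle, since the ODE is a textbook constant-coefficient problem; the one place demanding care is the correct transfer of initial data at $t=t^*$, specifically recognizing that $v_0(t^*)=\gamma/\eta$ is forced by the definition of $t^*$ and that $\dot v_0(t^*)=a_0(t^*)$, and then keeping the case-dependent root structure straight so the two constants are assigned to the correct exponential modes.
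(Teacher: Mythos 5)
Your proposal is correct and takes essentially the same route as the paper: both solve the governing ODE \eqref{eq:governing eq} case by case according to the discriminant of $\tau_\mathrm{d}\lambda^2+\lambda+\eta=0$ and conclude via $u_0(t)=-\eta v_0(t)$ for $t\geq t^*$. If anything, your write-up is more complete than the paper's one-line verification, which only asserts that the stated forms satisfy \eqref{eq:governing eq} and leaves implicit the determination of $\beta_2,\ldots,\beta_5$ from the initial data $v_0(t^*)=\gamma/\eta$ and $\dot{v}_0(t^*)=a_0(t^*)$ --- exactly the bookkeeping you supply.
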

\begin{proof} If $\eta < 1/(4\tau_\mathrm{d})$, then $v_0(t) = \beta_2 e^{\lambda_1(t-t^*)} + \beta_3 e^{\lambda_2(t-t^*)}$ satisfies \eqref{eq:governing eq}. If, on the other hand, $\eta = 1/(4\tau_\mathrm{d})$, then $v_0(t) =  e^{\lambda_3(t-t^*)} \left(\beta_4 + \beta_5(t-t^*)\right)$ satisfies  \eqref{eq:governing eq}. Therefore, the result follows by noting that $u_0(t)=-\eta v_0(t)$ for $t\geq t^*$.
\end{proof}

Proposition~\ref{prop:dynamic behavior} provides a formulation of the braking behavior as an explicit function of time and it allows us to introduce time-discretization in our simulation methods. 

\begin{figure}[t]
\centering 
\includegraphics[width=0.9\columnwidth]{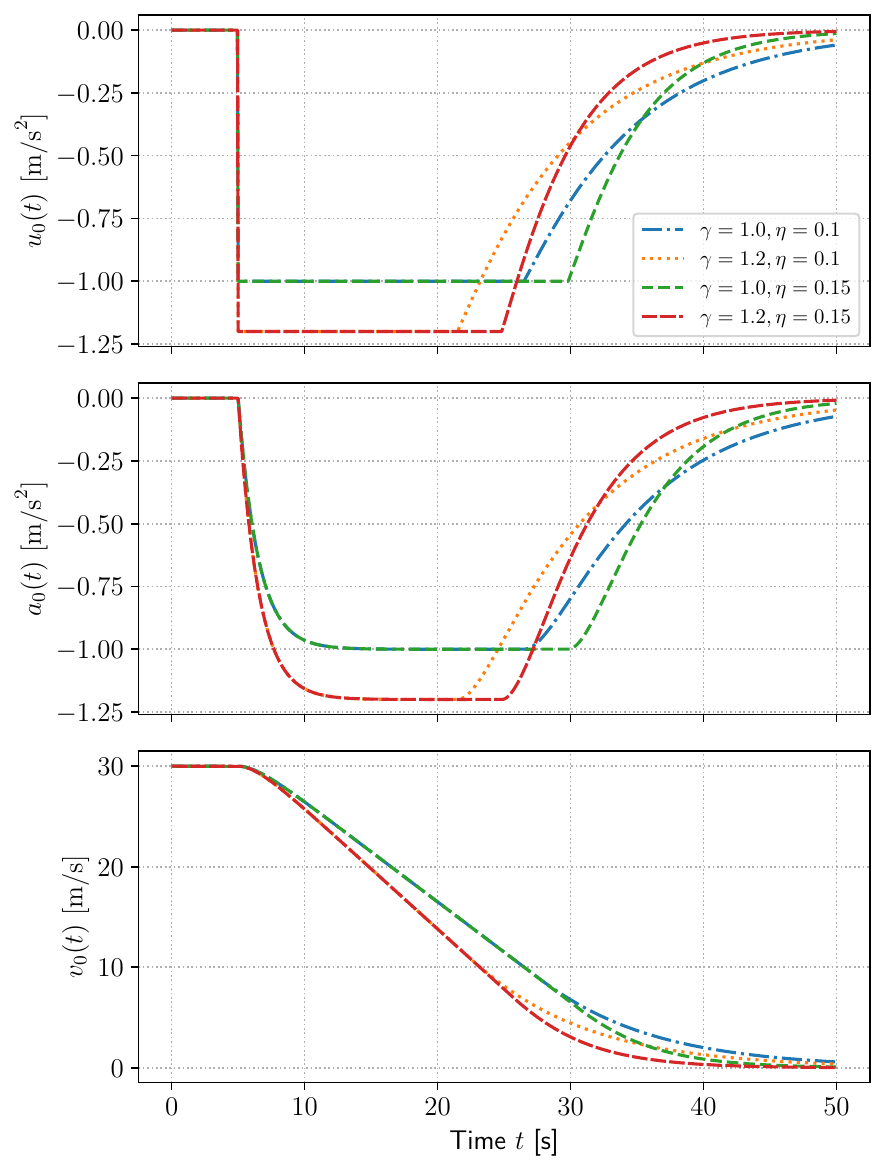}
\caption{Plots of desired acceleration $u_0(t)$, acceleration $a_0(t)$, and velocity $v_0(t)$ with respect to time $t$ for different braking parameters $\gamma$ and $\eta$.}
\label{plot_brake}
\end{figure}

In Figure~\ref{plot_brake}, we show plots of the acceleration and velocity profile of the virtual reference vehicle with a characteristic time constant $\tau_{\mathrm{d}}=1.5$ under a sudden brake at time $t=15$. We consider different values of the brake parameters $\eta$ and $\gamma$. As shown, when the value of $\eta$ is kept constant, a larger $\gamma$ leads to greater deceleration. We also observe that when the value of $\gamma$ is the same, as a result of increasing the value of $\eta$, the value of $t^*$ becomes larger, and the curves of $u_0(t)$ and $a_0(t)$ become steeper after $t^*$. These acceleration and velocity profiles are similar to those observed in real-world  experiments presented in \cite{lyubenov2014} that utilized high-frequency VBOX 3i 100Hz GPS Data Loggers to record emergency braking maneuvers in vehicles (see, e.g., Figure 5 and Figure 6 in \cite{lyubenov2014}).

\subsection{Communication losses}

In this paper, we model the communication losses in two distinct ways. 

\subsubsection{$\ell$-consecutive packet losses}

As a first model of communication losses, we consider consecutive
packet losses similar to those typically considered in networked control problems \cite{xiong2007stabilization}. In this model, communication attempts by each vehicle
$i\in\{1,\ldots,n-1\}$ are successful only at times $0,\ell+1,2(\ell+1),\ldots$,
where $\ell\in{\mathbb{N}}$ denotes the number of consecutive failures
in between those times. With this setting, the communication failure
indicator $l_{j}^{i}$ for each vehicle $i\in\{1,\ldots,n-1\}$ is
given by 
\begin{align}
l_{j}^{i} & =\begin{cases}
0, & j\in{\mathcal{J}}\triangleq\{0,\ell+1,2(\ell+1),\ldots\},\\
1, & j\in{\mathbb{N}}_{0}\setminus{\mathcal{J}}.
\end{cases}\label{eq:ell-drop}
\end{align}
 If $\ell$ is large, then the vehicles cannot communicate frequently.
In such a case, if a vehicle is braking and its desired acceleration
value cannot be passed in time to the next vehicle, then a collision
 may happen. 

\subsubsection{Random packet losses}

In the networked control literature, a typical model for wireless
networks is the model characterized by random packet losses \cite{hespanha2007,cetinkaya2016tac}.
In this model, $\{l_{j}^{i}\in\{0,1\}\}_{j\in{\mathbb{N}}_{0}}$ is
a Bernoulli-process with parameter $p\in[0,1]$ indicating the packet
loss probability. Specifically, 
\begin{align*}
 & {\mathbb{P}}[l_{j}^{i}=1]=p,\quad{\mathbb{P}}[l_{j}^{i}=0]=1-p,
\end{align*}
 for $j\in{\mathbb{N}}_{0}$ and $i\in\{1,2,\ldots,n-1\}$. 

We note that a similar packet loss model was used by \cite{acciani2022}
to characterize the lossy communication in vehicle platoons. There
a discrete-time vehicle platoon model was considered. 

\section{Time-Discretized Simulation Approach }
\label{sec:Time-Discretized-Simulation-Appr}
In this paper, we aim to calculate the minimum of minimum inter-vehicle
distances ($d_{\min}^{*}$) in (\ref{eq:d-min-def}) by numerically
finding solutions to the linear system (\ref{eq:linear-system})
for $t\in[0,t_{{\mathrm{end}}}]$. Our method is based on evaluating
the solution at discrete time instants in the set ${\mathcal{T}}\triangleq\{t_{0},t_{1},t_{2},\ldots\}$, where $t_i$, $i\in \mathbb{N}_0$, are called simulation time instants. 

\subsection{Time-discretized braking model}
To facilitate the simulation, we time-discretize the braking model   
 in (\ref{eq:brake-model}) as 
 \begin{align} 
 u_{0}(t) & \triangleq u_{0,k},\quad t\in[s_{k},s_{k+1}),\quad k\in{\mathbb{N}}_{0},\label{eq:discrete-brake-model-1-revised} 
 \end{align} 
 where 
 \begin{align} 
 u_{0,k} & =\begin{cases} 
 0, & kT<t_{{\mathrm{brake}}},\\
 -\gamma, & t_{{\mathrm{brake}}} \leq kT < t^*, \\
- \eta \big(\beta_2 e^{\lambda_1(kT-t^*)} \\
 \,\,\,\,+ \beta_3 e^{\lambda_2(kT-t^*)}\big), &kT \geq t^* \quad \text{and} \quad \eta<\frac{1}{4\tau_\mathrm{d}}, \\
 - \eta e^{\lambda_3(kT-t^*)} \big(\beta_4 \\
 \,\,\,\, + \beta_5(kT-t^*)\big), &kT\geq t^* \quad \text{and} \quad \eta=\frac{1}{4\tau_\mathrm{d}}, 
 \end{cases}\label{eq:discrete-brake-model-2-revised} 
 \end{align} 
 and $\beta_1, \beta_2, \beta_3, \beta_4$ are given in Proposition~\ref{prop:dynamic behavior}, $k\in{\mathbb{N}}_{0}$. Notice that $u_0(t)$ in \eqref{eq:discrete-brake-model-1-revised} is a piecewise constant function of time and it is constant between communication attempt times. This setting allows us to have the same braking behavior in different simulation approaches. This braking model, when combined with the communication model, allows us to obtain the 
 following result.

\begin{lemma} \label{Lemma-key} Consider the linear vehicle platoon
model (\ref{eq:linear-system}) with communication attempt time instants
${\mathcal{S}}=\{s_{0},s_{1},s_{2},\ldots\}$ and the time-discretized
braking model (\ref{eq:discrete-brake-model-1-revised}), (\ref{eq:discrete-brake-model-2-revised}). If ${\mathcal{S}}\subseteq{\mathcal{T}}$ with ${\mathcal{T}}\triangleq\{t_{0},t_{1},t_{2},\ldots\}$,
then the input vector $u(t)$ in (\ref{eq:linear-system}) satisfies
$u(t)=u(t_{k})$ for $t\in[t_{k},t_{k+1})$.\end{lemma}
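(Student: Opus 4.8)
The plan is to establish the result in two stages. First, I would show that every component of the input vector $u(t)$ is piecewise constant with breakpoints occurring only at the communication instants in $\mathcal{S}$. Second, I would use the inclusion $\mathcal{S}\subseteq\mathcal{T}$ to argue that each simulation interval $[t_k,t_{k+1})$ lies entirely inside a single communication interval, so that $u$ cannot change its value within it.

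For the first stage, I would examine each entry of $u(t)=[u_0(t),\hat{u}_0(t),\hat{u}_1(t),\ldots,\hat{u}_{n-1}(t)]^\top$ in turn. The time-discretized braking model (\ref{eq:discrete-brake-model-1-revised}) directly gives $u_0(t)=u_{0,k}$ for $t\in[s_k,s_{k+1})$, so $u_0$ is constant on each half-open interval between consecutive elements of $\mathcal{S}$. Since $\hat{u}_0(t)=u_0(t)$ by (\ref{eq:hatu-end}), the same holds for $\hat{u}_0$. For $i\in\{1,\ldots,n-1\}$, equations (\ref{eq:hat-u-mid}) and (\ref{eq:ui-piecewise-constant}) together show that $\hat{u}_i(t)=\hat{u}_i(s_j)$ for every $t\in[s_j,s_{j+1})$, so these entries are likewise constant on each such interval. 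Collecting these observations yields $u(t)=u(s_j)$ for all $t\in[s_j,s_{j+1})$.

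For the second stage, I would fix a simulation interval $[t_k,t_{k+1})$ and let $s_j$ be the largest communication instant not exceeding $t_k$, so that $t_k\in[s_j,s_{j+1})$. The key claim is that $t_{k+1}\leq s_{j+1}$, which I would prove by contradiction: if $t_{k+1}>s_{j+1}$, then $t_k<s_{j+1}<t_{k+1}$, so $s_{j+1}$ lies strictly between the consecutive simulation instants $t_k$ and $t_{k+1}$; but $s_{j+1}\in\mathcal{S}\subseteq\mathcal{T}$ makes $s_{j+1}$ a simulation instant, contradicting the fact that $t_k$ and $t_{k+1}$ are consecutive in $\mathcal{T}$. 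Hence $[t_k,t_{k+1})\subseteq[s_j,s_{j+1})$, and combining this with the first stage gives $u(t)=u(s_j)=u(t_k)$ for all $t\in[t_k,t_{k+1})$.

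The argument is largely bookkeeping rather than deep analysis, so I do not anticipate a genuine obstacle; the main point requiring care is the handling of the half-open intervals and their endpoints. In particular, I would verify that the value of $u$ at the left endpoint $t_k$ genuinely coincides with the constant value the components take on $(s_j,s_{j+1})$. This is precisely where the half-open convention in (\ref{eq:ui-piecewise-constant}), together with the values prescribed at the communication instants in (\ref{eq:hat-u-mid}), must be invoked, so that no discrepancy arises exactly at $t_k$ in the case where $t_k$ itself coincides with a communication instant $s_j$.
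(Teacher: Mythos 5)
Your proof is correct and follows essentially the same route as the paper's: show that every component of $u(t)$ is constant on each communication interval $[s_j,s_{j+1})$, then use ${\mathcal{S}}\subseteq{\mathcal{T}}$ to place each simulation interval $[t_k,t_{k+1})$ inside a single such interval. The only difference is that you explicitly prove (by contradiction) the inclusion $[t_k,t_{k+1})\subseteq[s_j,s_{j+1})$, which the paper asserts without argument as an immediate consequence of ${\mathcal{S}}\subseteq{\mathcal{T}}$.
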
 \begin{proof}Since
${\mathcal{S}}\subseteq{\mathcal{T}}$, there exists $j\in{\mathbb{N}}_{0}$
such that $[t_{k},t_{k+1})\subseteq[s_{j},s_{j+1})$. Let $\widetilde{u}_{i,j}\triangleq\hat{u}_{i}(s_{j})$
for $i\in\{1,2,\ldots,n-1\}$. Notice that by (\ref{eq:hat-u-mid})
and (\ref{eq:ui-piecewise-constant}), we have $\hat{u}_{i}(t)=\widetilde{u}_{i,j}$
for $t\in[s_{j},s_{j+1})$. Since $[t_{k},t_{k+1})\subseteq[s_{j},s_{j+1})$,
it follows that $\hat{u}_{i}(t)=\widetilde{u}_{i,j}$ for $t\in[t_{k},t_{k+1})$.
Finally, by definition of $u(t)$ in (\ref{eq:u-def}) and the time-discretized
braking model (\ref{eq:discrete-brake-model-1-revised}), (\ref{eq:discrete-brake-model-2-revised}),
we obtain
\begin{align*}
u(t) & =\left[u_{0,k},\widetilde{u}_{1,j},\widetilde{u}_{2,j},\ldots,\widetilde{u}_{(n-1),j}\right]^{\top}
\end{align*}
for $t\in[t_{k},t_{k+1})$, which completes the proof. \end{proof}

Observe that if the set ${\mathcal{S}}=\{s_{0},s_{1},s_{2},\ldots\}$
of communication attempt time instants satisfy ${\mathcal{S}}\subseteq{\mathcal{T}}\triangleq\{t_{0},t_{1},t_{2},\ldots.\}$,
then by Lemma~\ref{Lemma-key}, the solutions of (\ref{eq:linear-system})
under the braking model (\ref{eq:discrete-brake-model-1-revised}), (\ref{eq:discrete-brake-model-2-revised}),
can be computed by 
\begin{align}
&x(t_{0})  =x(0),\label{eq:x0}\\
 &x(t) =e^{A_{{\mathrm{c}}}(t-t_{k})}x(t_{k}) +\int_{0}^{t-t_{k}}e^{A_{{\mathrm{c}}}(t-t_{k}-\tau)}{\mathrm{d}}\tau B_{{\mathrm{c}}}u(t_{k}),\label{eq:xt}
\end{align}
 for $t\in[t_{k},t_{k+1}]$, $k\in{\mathbb{N}}_{0}$. This calculation
is possible, because $u(t)$ is constant in the interval $t\in[t_{k},t_{k+1})$
as implied by Lemma~\ref{Lemma-key}.

A consequence of (\ref{eq:x0}), (\ref{eq:xt}) together with the
definition of $x(t)$ in (\ref{eq:x-def}) and $d_{i}(t)$ in (\ref{eq:d-min-def})
is that, for each $i\in\{2,3,\ldots,n\}$, 
\begin{align}
d_{i}(t_{0}) & =q_{i}^{\top}x(0)-L_{i},\label{eq:d0}\\
d_{i}(t) & =q_{i}^{\top}e^{A_{{\mathrm{c}}}(t-t_{k})}x(t_{k})\nonumber \\
 & \,\,+q_{i}^{\top}\int_{0}^{t-t_{k}}e^{A_{{\mathrm{c}}}(t-t_{k}-\tau)}{\mathrm{d}}\tau B_{{\mathrm{c}}}u(t_{k})-L_{i},\label{eq:dt}
\end{align}
 for $t\in[t_{k},t_{k+1}]$, $k\in{\mathbb{N}}_{0}$. 

\subsection{Proposed simulation approach} \label{PSA}

In our simulation approach we evaluate the solutions $x(t_{k})$,
$k\in{\mathbb{N}}_{0}$, by using (\ref{eq:xt}) and then we calculate
$d_{i}(t_{k})$, $k\in{\mathbb{N}}_{0}$, by using (\ref{eq:di-equality}).
Since $d_{i}$ values are calculated only at times $t_{0},t_{1},t_{2},\ldots$
there is a possibility that vehicles may get dangerously close to
each other in between these simulation times even if $d(t_{0}),d(t_{1}),d(t_{2}),\ldots$
are sufficiently large. The following result indicates that if the
length of intervals between times $t_{0},t_{1},t_{2},\ldots$ are
sufficiently small, then for $t\in[t_{k},t_{k+1})$, the inter-vehicle
distance $d_{i}(t)$ does not deviate much from $d_{i}(t_{k})$. 

\begin{figure}[t]
\centering 
\includegraphics[width=0.9\columnwidth]{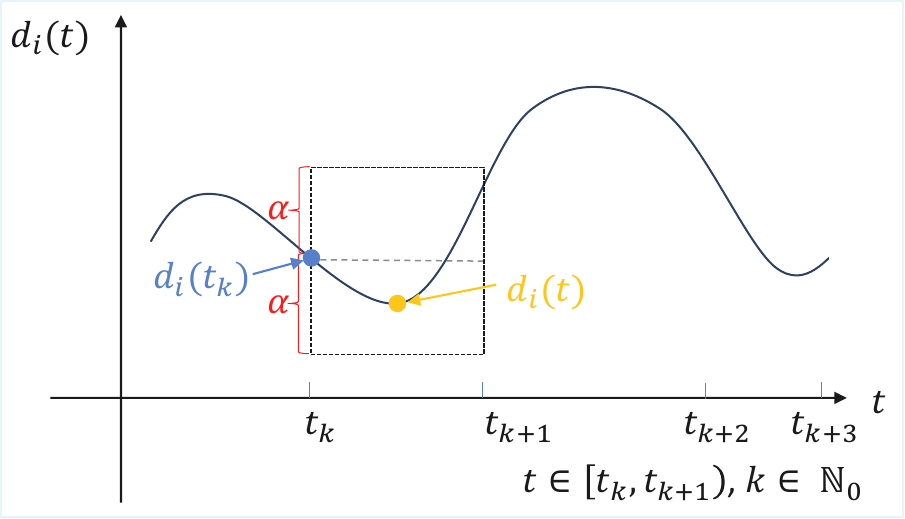}
\caption{An illustration of inter-vehicle distance $d_i(t)$ with respect to time $t \in [t_k,t_{k+1})$. }
\label{plot_di_illustration}
\end{figure}

\begin{theorem} \label{Theorem-Main} Consider the linear vehicle
platoon model (\ref{eq:linear-system}) with communication attempt
time instants ${\mathcal{S}}=\{s_{0},s_{1},s_{2},\ldots\}$ and the
time-discretized braking model (\ref{eq:discrete-brake-model-1-revised}), \eqref{eq:discrete-brake-model-2-revised}.
If ${\mathcal{S}}\subseteq{\mathcal{T}}$ with   ${\mathcal{T}}\triangleq\{t_{0},t_{1},t_{2},\ldots\}$ and there exists $\alpha>0$
such that 
\begin{align}
 & t_{k+1}-t_{k}\leq\ln\left(\frac{\alpha}{\sqrt{2}(\|x(t_{k})\|+\frac{\|B_{{\mathrm{c}}}u(t_{k})\|}{\|A_{{\mathrm{c}}}\|})}+1\right)/\|A_{{\mathrm{c}}}\|,\label{eq:tk-condition1}
\end{align}
for every $k\in{\mathbb{N}}_{0}$, then 
\begin{align}
|d_{i}(t)-d_{i}(t_{k})| & \leq\alpha,\quad t\in[t_{k},t_{k+1}),\label{eq:di-bound}
\end{align}
for every $k\in{\mathbb{N}}_{0}$ and every $i\in\{2,3,\ldots,n\}$.
\end{theorem}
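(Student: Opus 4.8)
The plan is to reduce the scalar bound (\ref{eq:di-bound}) to a norm estimate on the state increment $x(t)-x(t_k)$, and then to show that the step-size condition (\ref{eq:tk-condition1}) has been reverse-engineered precisely to make this estimate no larger than $\alpha$. First I would fix $k\in\mathbb{N}_0$, $i\in\{2,\ldots,n\}$, and $t\in[t_k,t_{k+1})$, and use the representation $d_i(t)=q_i^\top x(t)-L_i$ from (\ref{eq:di-equality}) to write $d_i(t)-d_i(t_k)=q_i^\top\bigl(x(t)-x(t_k)\bigr)$, so that $|d_i(t)-d_i(t_k)|\le\|q_i\|\,\|x(t)-x(t_k)\|$ by the Cauchy--Schwarz inequality. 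The key observation here is that, by the definition of $q_i$ in (\ref{eq:vij-def}), each $q_i$ has exactly two nonzero entries, both equal to $\pm1$, so $\|q_i\|=\sqrt{2}$; this is the source of the factor $\sqrt{2}$ appearing in (\ref{eq:tk-condition1}).

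Next I would estimate $\|x(t)-x(t_k)\|$. Since $\mathcal{S}\subseteq\mathcal{T}$, Lemma~\ref{Lemma-key} guarantees that $u$ is constant on $[t_k,t_{k+1})$, so the variation-of-constants formula (\ref{eq:xt}) applies. Writing $s\triangleq t-t_k\ge0$, I would split $x(t)-x(t_k)=(e^{A_\mathrm{c}s}-I)x(t_k)+\int_0^s e^{A_\mathrm{c}(s-\tau)}\mathrm{d}\tau\,B_\mathrm{c}u(t_k)$ and bound the two terms separately. For the homogeneous term I would use the power-series bound $\|e^{A_\mathrm{c}s}-I\|\le e^{\|A_\mathrm{c}\|s}-1$, obtained by applying submultiplicativity termwise to $\sum_{j\ge1}(A_\mathrm{c}s)^j/j!$. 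For the forced term I would pull the norm inside the integral and use $\|e^{A_\mathrm{c}(s-\tau)}\|\le e^{\|A_\mathrm{c}\|(s-\tau)}$, evaluating the elementary integral to get $\int_0^s\|e^{A_\mathrm{c}(s-\tau)}\|\,\mathrm{d}\tau\,\|B_\mathrm{c}u(t_k)\|\le\tfrac{e^{\|A_\mathrm{c}\|s}-1}{\|A_\mathrm{c}\|}\|B_\mathrm{c}u(t_k)\|$. The pleasant feature is that the common factor $e^{\|A_\mathrm{c}\|s}-1$ factors out, yielding $\|x(t)-x(t_k)\|\le(e^{\|A_\mathrm{c}\|s}-1)\bigl(\|x(t_k)\|+\|B_\mathrm{c}u(t_k)\|/\|A_\mathrm{c}\|\bigr)$.

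Combining the two steps gives $|d_i(t)-d_i(t_k)|\le\sqrt{2}\,(e^{\|A_\mathrm{c}\|s}-1)\bigl(\|x(t_k)\|+\|B_\mathrm{c}u(t_k)\|/\|A_\mathrm{c}\|\bigr)$. The final step is purely algebraic: I would solve $\sqrt{2}\,(e^{\|A_\mathrm{c}\|s}-1)(\cdots)\le\alpha$ for $s$ by isolating the exponential, taking logarithms, and dividing by $\|A_\mathrm{c}\|$, which reproduces exactly the right-hand side of (\ref{eq:tk-condition1}). Since $s=t-t_k\le t_{k+1}-t_k$ and $s\mapsto e^{\|A_\mathrm{c}\|s}$ is increasing, the hypothesis (\ref{eq:tk-condition1}) forces the bound, establishing (\ref{eq:di-bound}) uniformly in $i$ and $t\in[t_k,t_{k+1})$.

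I do not expect a genuine obstacle, since the estimate is a direct forward computation; the only points requiring care are ensuring the two exponential bounds share the common factor $e^{\|A_\mathrm{c}\|s}-1$ so that the step-size condition inverts cleanly, and verifying $\|q_i\|=\sqrt{2}$ to account for the constant in (\ref{eq:tk-condition1}). One could sharpen the constant by replacing $\|A_\mathrm{c}\|$ with the logarithmic norm $\mu(A_\mathrm{c})$ via $\|e^{A_\mathrm{c}s}\|\le e^{\mu(A_\mathrm{c})s}$, but for this theorem the coarser submultiplicative bound suffices and keeps the condition in closed form.
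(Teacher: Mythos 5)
Your proposal is correct and follows essentially the same route as the paper's own proof: Cauchy--Schwarz with $\|q_i\|=\sqrt{2}$, the variation-of-constants split into the terms $(e^{A_{\mathrm{c}}(t-t_k)}-I)x(t_k)$ and the forced integral, the common bound $e^{\|A_{\mathrm{c}}\|(t-t_k)}-1$, and the logarithmic inversion of the step-size condition. The only cosmetic difference is that you justify $\|e^{A_{\mathrm{c}}s}-I\|\leq e^{\|A_{\mathrm{c}}\|s}-1$ by a termwise power-series estimate, whereas the paper uses the integral representation $e^{A_{\mathrm{c}}s}-I=\int_0^{s}A_{\mathrm{c}}e^{A_{\mathrm{c}}\tau}\,\mathrm{d}\tau$; both yield the identical bound.
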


Theorem~\ref{Theorem-Main} is an improved version of Theorem~3 of the conference version \cite{chen2025sice} of this paper. It provides a less restrictive condition compared to that presented in \cite{chen2025sice} and its proof is given in the Appendix. It indicates that if the lengths of the
intervals between the simulation time instants $t_{0},t_{1},t_{2}\ldots,$
are sufficiently small so that (\ref{eq:tk-condition1}) holds with
a scalar $\alpha>0$, then the inter-vehicle distances do not deviate
more than $\alpha$ units in each time interval $[t_{k},t_{k+1})$. This point is illustrated in Figure~\ref{plot_di_illustration}. Furthermore, this shows that if the simulation results indicate 
\begin{align*}
d_{i}(t_{k}) & \geq\theta,\quad k\in{\mathbb{N}}_{0},
\end{align*}
for some $\theta>0$, then $d_{i}(t)\geq\theta-\alpha$ holds by (\ref{eq:di-bound}). If, for example, $d_i(t_k)>\alpha$, then $d_i(t)>0$ holds for all $t\in [t_k,t_{k+1})$, which indicates that there is no crash between $(i-1)$th and $i$th vehicles within that interval, even though inter-vehicle distances are not explicitly calculated in the strict interval between simulation times.  

Similarly,
if we compute 
\begin{align}
d_{\min} & \triangleq\min_{i\in\{2,\ldots,n\}}\min_{k\in\{0,\ldots,k_{{\mathrm{end}}}\}}d_{i}(t_{k})\label{eq:dmin-discrete}
\end{align}
 with $k_{\mathrm{end}}\in\mathbb{N}_0$ denoting the total number of simulation time instants and observe that 
\begin{align*}
d_{\min} & \geq\theta,
\end{align*}
 then $d_{\min}^{\star}$ given in (\ref{eq:d-min-def}) with $t_{{\mathrm{end}}}=t_{k_{{\mathrm{end}}}}$satisfies
\begin{align}
d_{\min}^{\star} & \geq\theta-\alpha.\label{eq:d-min-satisfies-beta-minus-alpha-bound}
\end{align}
 This is because by (\ref{eq:di-bound}), we have $|d_{\min}^{\star}-d_{\min}|\leq\alpha$.
We remark that if $d_{\min}>\alpha$, it is guaranteed that there
is no collision in the platoon, since in this case it is guaranteed
that $d_{\min}^{*}>0$. 

\subsection{A faster simulation approach with lifted states}
In the following, we show that a faster simulation is possible under additional conditions. In particular, we provide a faster approach compared to that in Theorem~\ref{Theorem-Main}, by considering alternative dynamics with the lifted state vector  
\begin{align}
\tilde{x}(t)\triangleq[x^{\top}(t),u^{\top}(t)]^{\top}, \quad t\geq0. \label{eq:xtildet}
\end{align}
This vector combines the original state $x(t)$ and input $u(t)$ of the system \eqref{eq:linear-system} into a single vector. Since the input $u(t)$ remains constant in the interval $[t_{k},t_{k+1})$, it follows from \eqref{eq:linear-system} that
\begin{align}
\dot{\tilde{x}}(t) &= \tilde{A}\tilde{x}(t), \quad t\in[t_{k},t_{k+1}), \label{eq:lifted-system}
\end{align}
where $\tilde{A}\in\mathbb{R}^{(4+7n)\times(4+7n)}$ is given in block-matrix form as 
\begin{align}
\tilde{A} &\triangleq \begin{bmatrix} A_\mathrm{c} & B_\mathrm{c} \\ 0 & 0 \end{bmatrix}. \label{eq:mu-Atilde-def}
\end{align}
Now, we define two matrices $b_1\in \mathbb{R}^{(4+7n)\times(3+6n)}$  and $b_2\in\mathbb{R}^{(4+7n)\times(1+n)}$  in block-matrix form as 
\begin{align*}
b_1 &\triangleq \begin{bmatrix} I_{3+6n} \\ 0_{(1+n)\times(3+6n)} \end{bmatrix}, &b_2 \triangleq \begin{bmatrix} 0_{(3+6n)\times(1+n)} \\ I_{1+n} \end{bmatrix}.
\end{align*} 
Notice that 
\begin{align}
    x(t)=b_1^\top\tilde{x}(t),\quad u(t)=b_2^\top\tilde{x}(t). \label{eq:x-xtilde-relation}
\end{align}
We are now ready to provide an alternative to Theorem~\ref{Theorem-Main}. 
\begin{theorem} \label{Theorem-7}
Consider the linear vehicle
platoon model (\ref{eq:linear-system}) with communication attempt
time instants ${\mathcal{S}}=\{s_{0},s_{1},s_{2},\ldots\}$ and the
time-discretized braking model (\ref{eq:discrete-brake-model-1-revised}), \eqref{eq:discrete-brake-model-2-revised}.
Suppose $\mu(\tilde{A})>0$ holds for $\tilde{A}$ defined in \eqref{eq:mu-Atilde-def}. If ${\mathcal{S}}\subseteq{\mathcal{T}}$ with   ${\mathcal{T}}\triangleq\{t_{0},t_{1},t_{2},\ldots\}$, and there exists $\alpha>0$
such that
\begin{align}
 & t_{k+1}-t_{k} \leq\ln\left(\frac{\mu(\tilde{A})\alpha}{\varphi\|\tilde{x}(t_{k})\|}+1\right)/\mu(\tilde{A}), \label{eq:tk-condition7}
\end{align}
for every $k\in{\mathbb{N}}_{0}$, where
\begin{align}
\varphi \triangleq \max_{i\in\{2,\ldots n\}}\|q_{i}^{\top}(A_\mathrm{c}b_{1}^{\top}+B_\mathrm{c}b_{2}^{\top})\|, \label{eq:varphi_2}
\end{align}
then (\ref{eq:di-bound}) holds for every $k\in{\mathbb{N}}_{0}$ and every $i\in\{2,\ldots,n\}$.
\end{theorem}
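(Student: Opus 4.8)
The plan is to control the deviation $|d_i(t)-d_i(t_k)|$ by bounding the time derivative of $d_i$ directly along trajectories of the lifted system \eqref{eq:lifted-system}, instead of bounding $\|x(t)-x(t_k)\|$ as in the proof of Theorem~\ref{Theorem-Main}. First I would differentiate $d_i(t)=q_i^\top x(t)-L_i$ from \eqref{eq:di-equality} using the dynamics \eqref{eq:linear-system}, obtaining $\dot d_i(t)=q_i^\top(A_{\mathrm{c}}x(t)+B_{\mathrm{c}}u(t))$. Substituting the lifted-state relations \eqref{eq:x-xtilde-relation}, that is $x(t)=b_1^\top\tilde x(t)$ and $u(t)=b_2^\top\tilde x(t)$, collapses this to the compact form
\begin{align*}
\dot d_i(t)=q_i^\top\bigl(A_{\mathrm{c}}b_1^\top+B_{\mathrm{c}}b_2^\top\bigr)\tilde x(t),\quad t\in[t_k,t_{k+1}),
\end{align*}
so that the single matrix $q_i^\top(A_{\mathrm{c}}b_1^\top+B_{\mathrm{c}}b_2^\top)$ appearing in the definition \eqref{eq:varphi_2} of $\varphi$ arises naturally.

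Next, since Lemma~\ref{Lemma-key} guarantees $u$ is constant on $[t_k,t_{k+1})$, the lifted dynamics \eqref{eq:lifted-system} apply with solution $\tilde x(\tau)=e^{\tilde A(\tau-t_k)}\tilde x(t_k)$ on that interval. I would then write $d_i(t)-d_i(t_k)=\int_{t_k}^{t}\dot d_i(\tau)\,\mathrm{d}\tau$ and bound the integrand by Cauchy--Schwarz together with the maximization in \eqref{eq:varphi_2}, giving $|\dot d_i(\tau)|\le\varphi\|\tilde x(\tau)\|$ uniformly in $i\in\{2,\ldots,n\}$. The decisive ingredient is the standard logarithmic-norm estimate $\|e^{\tilde A s}\|\le e^{\mu(\tilde A)s}$, which yields $\|\tilde x(\tau)\|\le e^{\mu(\tilde A)(\tau-t_k)}\|\tilde x(t_k)\|$. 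It is exactly this use of $\mu(\tilde A)$ in place of $\|A_{\mathrm{c}}\|$ that makes the step-size condition less conservative and the simulation faster.

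Combining these estimates and integrating gives
\begin{align*}
|d_i(t)-d_i(t_k)|\le\varphi\|\tilde x(t_k)\|\int_{t_k}^{t}e^{\mu(\tilde A)(\tau-t_k)}\,\mathrm{d}\tau=\varphi\|\tilde x(t_k)\|\,\frac{e^{\mu(\tilde A)(t-t_k)}-1}{\mu(\tilde A)}.
\end{align*}
Since the right-hand side is increasing in $t-t_k$, I would invoke the hypothesis \eqref{eq:tk-condition7}: rearranging it is equivalent to $\frac{e^{\mu(\tilde A)(t_{k+1}-t_k)}-1}{\mu(\tilde A)}\le\frac{\alpha}{\varphi\|\tilde x(t_k)\|}$, so for every $t\in[t_k,t_{k+1})$ the displayed bound is at most $\alpha$, which is \eqref{eq:di-bound}. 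The main obstacle, and the step requiring the most care, is the correct handling of the logarithmic-norm bound and the monotonicity inversion: the quantity $(e^{\mu(\tilde A)s}-1)/\mu(\tilde A)$ is increasing in $s$ regardless of the sign of $\mu(\tilde A)$, so the inversion of \eqref{eq:tk-condition7} is legitimate as long as the argument of the logarithm is positive (in particular for $\mu(\tilde A)>0$). Everything else reduces to routine algebra.
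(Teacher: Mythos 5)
Your proof is correct and follows essentially the same route as the paper's own proof: differentiate $d_i$ along the lifted dynamics to get $\dot d_i(s)=q_i^\top(A_{\mathrm{c}}b_1^\top+B_{\mathrm{c}}b_2^\top)\tilde x(s)$, bound it via Cauchy--Schwarz and the logarithmic-norm estimate $\|e^{\tilde A s}\|\le e^{\mu(\tilde A)s}$, integrate, and invert the step-size condition \eqref{eq:tk-condition7}. Your additional remark that $(e^{\mu(\tilde A)s}-1)/\mu(\tilde A)$ is increasing in $s$ regardless of the sign of $\mu(\tilde A)$ is a slight refinement of the paper's argument, which simply assumes $\mu(\tilde A)>0$ (as noted in the remark following the theorem).
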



\begin{proof}
The differences in inter-vehicle distances for  $t\in [t_k, t_{k+1})$ and their upper bounds are calculated by applying the Fundamental Theorem of Calculus as 
\begin{align}
|d_i(t)-d_i(t_k)| = \left|\int_{t_k}^{t}\frac{\mathrm{d}}{\mathrm{d}s}d_i(s)\mathrm{d}s\right|\leq\int_{t_k}^{t}\left|\frac{\mathrm{d}}{\mathrm{d}s}d_i(s)\right|\mathrm{d}s. \label{eq:intervehicledistance_int}
\end{align}
 By \eqref{eq:linear-system}, \eqref{eq:di-equality}, and \eqref{eq:x-xtilde-relation}, we obtain
\begin{align}
\frac{\mathrm{d}}{\mathrm{d}s}d_i(s) &= q_i^\top \dot{x}(s) = q_i^\top (A_\mathrm{c}x(s)+B_\mathrm{c}u(s))  \nonumber \\ 
 &=q_{i}^{\top}(A_\mathrm{c}b_{1}^{\top}\tilde{x}(s)+B_\mathrm{c}b_{2}^{\top}\tilde{x}(s)) \nonumber \\ 
&= q_{i}^{\top}(A_\mathrm{c}b_{1}^{\top}+B_\mathrm{c}b_{2}^{\top})\tilde{x}(s). \label{eq:derivative_di_7}
\end{align}
Then applying Cauchy-Schwarz inequality, we get 
\begin{align}
\left|\frac{\mathrm{d}}{\mathrm{d}s}d_i(s)\right|\leq \|q_{i}^{\top}(A_\mathrm{c}b_{1}^{\top}+B_\mathrm{c}b_{2}^{\top})\|\|\tilde{x}(s)\|. \label{eq:cauchy-schwarz inequality theorem7_old}
\end{align}
Here, we want to bound $\|\tilde{x}(s)\|$. Note that \eqref{eq:lifted-system} implies that $\tilde{x}(s)=e^{\tilde{A}(s-t_k)}\tilde{x}(t_k)$ for $s\in[t_k,t_{k+1})$. Therefore, by using the property of the logarithmic matrix norm (see Proposition 2.1 in \cite{soderlind2006logarithmic}), we have
\begin{align}
\|\tilde{x}(s)\| = \|e^{\tilde{A}(s-t_k)}\tilde{x}(t_k)\| &\leq \|e^{\tilde{A}(s-t_k)}\|\|\tilde{x}(t_k)\|   \nonumber \\
&\leq  e^{\mu(\tilde{A})(s-t_k)}\|\tilde{x}(t_k)\|. \label{eq:tildexs}
\end{align}
By substituting (\ref{eq:tildexs}) to (\ref{eq:cauchy-schwarz inequality theorem7_old}) and using the constant $\varphi$, we get 
\begin{align}
\left|\frac{\mathrm{d}}{\mathrm{d}s}d_i(s)\right| &\leq \|q_{i}^{\top}(A_\mathrm{c}b_{1}^{\top}+B_\mathrm{c}b_{2}^{\top})\| e^{\mu(\tilde{A})(s-t_k)}\|\tilde{x}(t_k)\| \nonumber \\
& \leq \varphi e^{\mu(\tilde{A})(s-t_k)}\|\tilde{x}(t_k)\|.
\label{eq:cauchy-schwarz inequality theorem7}
\end{align}
Now, by \eqref{eq:intervehicledistance_int} and (\ref{eq:cauchy-schwarz inequality theorem7}), we arrive at
\begin{align}
|d_i(t)-d_i(t_k)| &\leq \int_{t_k}^{t}\varphi e^{\mu(\tilde{A})(s-t_k)}\|\tilde{x}(t_{k})\|\mathrm{d}s \nonumber \\
&=\varphi\|\tilde{x}(t_{k})\|\int_{t_k}^{t}e^{\mu(\tilde{A})(s-t_k)}\mathrm{d}s \nonumber \\
&= \frac{\varphi\|\tilde{x}(t_{k})\|} 
{\mu(\tilde{A})}(e^{\mu(\tilde{A})(t-t_k)}-1).
\label{eq:dit_minus_ditk_7}
\end{align}
Furthermore, for $t\in[t_{k},t_{k+1})$, we have $t-t_{k}\leq t_{k+1}-t_{k}$. Therefore, by (\ref{eq:tk-condition7}), we obtain
\begin{align}
e^{\mu(\tilde{A})(t-t_k)} &\leq e^{\mu(\tilde{A})(t_{k+1}-t_k)} \leq e^{\mu(\tilde{A})\ln\left(\frac{\mu(\tilde{A})\alpha}{\varphi\|\tilde{x}(t_{k})\|}+1\right)/\mu(\tilde{A})} \nonumber \\
&= \frac{\mu(\tilde{A})\alpha}{\varphi\|\tilde{x}(t_{k})\|}+1. 
\label{eq:alpha_constraint_7}
\end{align}
It then follows from (\ref{eq:dit_minus_ditk_7}) and (\ref{eq:alpha_constraint_7}) that
\begin{align*}
|d_i(t) - d_i(t_k)| \leq \frac{\varphi\|\tilde{x}(t_{k})\|}{\mu(\tilde{A})} \frac{\mu(\tilde{A})\alpha}{\varphi\|\tilde{x}(t_{k})\|} = \alpha,
\end{align*}
which completes the proof.
\end{proof}

Compared with Theorem~\ref{Theorem-Main}, the primary advantage of Theorem~\ref{Theorem-7} is that it enhances computational efficiency. Specifically, in practice, \eqref{eq:tk-condition7} allows for a larger simulation time gap $t_{k+1} - t_k$ compared to \eqref{eq:tk-condition1}, meaning that the simulation can be completed in fewer discrete steps. Notice that, while the simulation running time is reduced, we still have the same guaranteed error bound $\alpha$  in (\ref{eq:di-bound}). This point is illustrated in Section~\ref{sec:Numerical-Example}.
\begin{remark} \label{RemarkMu}
In comparison with Theorem~\ref{Theorem-Main}, Theorem~\ref{Theorem-7} has an additional condition.  Theorem~\ref{Theorem-7} requires $\mu(\tilde{A})>0$. Notice that this is satisfied when the matrix $N$ in $A_\mathrm{c}$ has at least one positive eigenvalue. This is because $N$ having a positive eigenvalue implies that $A_\mathrm{c}$ also has a positive eigenvalue, since $A_\mathrm{c}$ is block lower-triangular and $N$ is one of the blocks on the diagonal. More over $A_{\mathrm{c}}$ having a positive eigenvalue implies $A_\mathrm{c}+A^\top_\mathrm{c}$ also has a positive eigenvalue, that is $\lambda_{\mathrm{max}}(A_\mathrm{c}+A_\mathrm{c}^\top) > 0$. Therefore, there exists $x\in \mathbb R^{3+6n}\setminus \{0\}$ such that 
\begin{align*}
    0 & <x^\top(A_\mathrm{c}+A_\mathrm{c}^\top)x \\
    & =[x^\top,0_{1\times(1+n)}](\tilde A + \tilde A ^\top )[x^\top,0_{1\times(1+n)}]^\top,
\end{align*}
which in turn implies that $\tilde{A}+\tilde{A}^\top$ has at least one positive eigenvalue. Hence, $\mu(\tilde A)=\lambda_{\max} ((\tilde{A}+ \tilde{A} ^\top)/2)>0$. In numerical scenarios that we consider in Section~\ref{sec:Numerical-Example}, $\mu(\tilde{A})>0$ is always satisfied.
\end{remark}

\subsection{Design of simulation times} \label{subsection:design of simulation times}
Excluding the condition $\mu(\tilde{A})>0$ discussed in Remark~\ref{RemarkMu}, Theorems~\ref{Theorem-Main} and \ref{Theorem-7} both impose two conditions to guarantee that the bound in (\ref{eq:di-bound})
holds. The first of these conditions is ${\mathcal{S}}\subseteq{\mathcal{T}}$. The second one is 
 (\ref{eq:tk-condition1}) for Theorem~\ref{Theorem-Main} and \eqref{eq:tk-condition7} for Theorem~\ref{Theorem-7}. These two conditions can be simultaneously satisfied by designing
the simulation times $t_{0},t_{1},t_{2}\ldots,$ in a special way.
Specifically, we consider a large integer $\overline{N}\in{\mathbb{N}}$.
Now, for each $k\in{\mathbb{N}}_{0}$, we define
\begin{align}
\nu_{k}^{(1)}\triangleq & \left\lfloor \frac{\overline{N}}{T}\frac{\ln\left(\frac{\alpha}{\sqrt{2}(\|x(t_{k})\|+\|B_{{\mathrm{c}}}u(t_{k})\|/\|A_{{\mathrm{c}}}\|)})+1\right)}{\|A_{{\mathrm{c}}}\|}\right\rfloor, \label{eq:nuk3-def} \\
\nu_{k}^{(2)}\triangleq & \left\lfloor \frac{\overline{N}}{T} \frac{\ln\left(\frac{\mu(\tilde{A})\alpha}{\varphi\|\tilde{x}(t_{k})\|}+1\right)}{\mu(\tilde{A})}\right\rfloor. \label{eq:nuk8-def}
\end{align}
Then for Theorem~$i$ with $i$ being either $1$ or $2$, we set 
\begin{align}
t_{0} & =0,\label{eq:tset-0}\\
t_{k+1} & = t_k + \frac{T}{\overline{N}}  \nonumber \\
&\quad \cdot\min\left\{ \left\lfloor \frac{\overline{N}}{T}
\left( \left( \left\lfloor \frac{t_k}{T} \right\rfloor + 1 \right) T - t_k \right) \right\rfloor,\, \nu_{k}^{(i)} \right\}, \label{eq:tset-k}
\end{align}
for $k\in\mathbb{N}_0$. It follows that (\ref{eq:tk-condition1}) holds as long as $\overline{N}$
is sufficiently large so that $1\leq\nu_{k}^{(1)}\leq\overline{N}$ holds
for every $k\in{\mathbb{N}}_{0}$. Similarly, (\ref{eq:tk-condition7}) holds as long as $\overline{N}$
is sufficiently large so that $1\leq\nu_{k}^{(2)}\leq\overline{N}$ holds
for every $k\in{\mathbb{N}}_{0}$. Furthermore, since $s_{0}=0$ and
$s_{j+1}-s_{j}=T$ for $j\in{\mathbb{N}}_{0}$, it follows from (\ref{eq:tset-0})
and (\ref{eq:tset-k}) that ${\mathcal{S}}\subseteq{\mathcal{T}}$.

In Section~\ref{sec:Numerical-Example}, we utilize the proposed simulation methods to study the behavior of vehicles in a platoon under sudden braking and communication losses. However, before that, in the next section, we aim to expand our simulation approach to analyze fuel efficiency of vehicle platoons. 

\section{Fuel Efficiency Analysis via Simulation}
\label{sec:FuelEfficiency}

While safety is the primary concern in vehicle platooning, fuel efficiency is also a key motivation for its deployment. The aerodynamic drag, which is the major resistance at highway speeds, is significantly influenced by the inter-vehicle distance $d_i(t)$. To evaluate the performance of a vehicle platoon in terms of efficiency, we aim to quantify the fuel savings enabled by the platoon formation.

In what follows, we first use the findings of past research \cite{hussein2021,barhoumi2025} to characterize the fuel savings achieved by each vehicle within the interval between two simulation times. Then, we obtain error bounds that our approach can guarantee when these fuel savings are quantified by simulation. We note that past research does not provide such bounds.

\subsection{Fuel savings of each vehicle in a platoon}

First, based on the aerodynamic drag model in equation (8) of \cite{barhoumi2025}, the drag force $F_{\mathrm{D},i}(t)$ acting on the $i$th vehicle (with $i\in \{2,3,\ldots,n\}$) at time $t$ is defined as
\begin{equation}
F_{\mathrm{D},i}(t) \triangleq \frac{1}{2} \rho K_\mathrm{vfa} C_{\mathrm{D},i}(d_i(t)) v_i^2(t), \label{eq:drag force}
\end{equation}
where $\rho>0$ denotes the air density, $K_\mathrm{vfa}>0$ is the vehicle frontal area (assumed for simplicity to be the same for all vehicles), $v_i(t)$ is the velocity of the $i$th vehicle as characterized in \eqref{eq:vi-def-first}, and $C_{\mathrm{D},i}(d_i(t))>0$ is the drag coefficient.
As described in equation (9) of \cite{barhoumi2025}, the drag coefficient $C_{\mathrm{D},i}(d_i(t))$ is not constant but varies as a function of the inter-vehicle distance $d_i(t)$.

To explicitly characterize the aerodynamic drag coefficient $C_{\mathrm{D},i}$ as a function of the inter-vehicle distance $d_i(t)$, we use the rational polynomial model proposed by \cite{hussein2021}. This model is obtained from the wind tunnel data and captures the complex nonlinear aerodynamic interactions. In this model, the aerodynamic drag coefficient for the $i$-th following vehicle is expressed as
\begin{equation}
C_{\mathrm{D},i}(d_i(t)) = C_{\mathrm{B}}\mathcal{F}_i(d_i(t)), \label{eq:drag_coeff_general}
\end{equation}
where $C_{\mathrm{B}}>0$ denotes the baseline drag coefficient of a solitary vehicle driving by itself and $\mathcal{F}_i(d_i(t))$ is the drag reduction factor defined for each vehicle $i$ by a third-order rational polynomial function (see Section III of \cite{hussein2021}) as 
\begin{align}
&\mathcal{F}_i(d) 
\triangleq 
\begin{cases} 
  \frac{\sum_{j=0}^{3}\hat{w}_{i,j}d^j}{\sum_{j=0}^{3}\tilde{w}_{i,j}d^j}, & 0 < d \leq d_{\mathrm{max},i}, \\
1, & d > d_{\mathrm{max},i},
\end{cases}
\label{eq:rational_poly}
\end{align}
where $d_{\mathrm{max},i}>0$ represents the maximum distance in front of $i$th vehicle where drag is considered to have an effect. Furthermore,  $\hat{w}_{i,j}\in \mathbb{R}$ and $\tilde{w}_{i,j}\in \mathbb{R}$, for $j\in \{0,1,2,3\}$, are empirical coefficients obtained, for example, for light-duty vehicles, as presented in Table III of \cite{hussein2021}. Notice that $\mathcal{F}_i(d_i(t))\in [0,1]$ always holds. 

Next, by following the equation (10) of \cite{barhoumi2025}, the power savings $P_{\mathrm{save},i}(t)$ for the $i$th vehicle, is derived as the product of the drag force reduction (difference between driving alone and driving in platoon) and velocity as
\begin{equation}
P_{\mathrm{save},i}(t) = \left( \frac{1}{2} \rho K_\mathrm{vfa} C_{\mathrm{B}}v_i^2(t) - F_{\mathrm{D},i}(t) \right) v_i(t). \label{eq:pw saving}
\end{equation}
By substituting (\ref{eq:drag force}) into (\ref{eq:pw saving}), we get
\begin{equation}
P_{\mathrm{save},i}(t) = \frac{1}{2} \rho K_\mathrm{vfa} \left( C_{\mathrm{B}}- C_{\mathrm{D},i}(d_i(t)) \right) v_i^3(t). \label{eq:Psave-expanded}
\end{equation}

Finally, we use the fuel consumption relationship from equation (11) of \cite{barhoumi2025} to characterize the fuel saving $m_{\mathrm{save},i}$ of the $i$th vehicle over the simulation time steps [$t_k$, $t_{k+1}$) as
\begin{equation}
m_{\mathrm{save},i,k} \triangleq  \int_{t_k}^{t_{k+1}} \frac{P_{\mathrm{save},i}(\tau)}{Q_{\mathrm{LHV}}\eta_\mathrm{engine}} \mathrm{d}\tau, \label{eq:fuel_integral_tk-tk+1}
\end{equation}
where $\eta_\mathrm{engine}>0$ is the constant engine efficiency and $Q_{\mathrm{LHV}}>0$ is the fuel lower heating value.
Similarly, the total amount of fuel savings $M_{\mathrm{save},i}$ over the entire simulation duration $[0, t_{\mathrm{end}}]$ is computed as 
\begin{equation}
M_{\mathrm{save},i} \triangleq \int_{0}^{t_{\mathrm{end}}} \frac{P_{\mathrm{save},i}(\tau)}{Q_{\mathrm{LHV}}\eta_\mathrm{engine}} \mathrm{d}\tau. \label{eq:fuel_integral_0-tend}
\end{equation}

In our simulation framework presented in Section~\ref{sec:Time-Discretized-Simulation-Appr}, intervehicle distances $d_i(t)$ and velocities $v_i(t)$ are only computed at simulation times. As a result, the integrals in \eqref{eq:fuel_integral_tk-tk+1} cannot be evaluated exactly, and thus, fuel saving $m_{\mathrm{save},i,k}$ cannot be computed exactly. Instead, we approximate the integral in \eqref{eq:fuel_integral_tk-tk+1} and obtain approximate fuel savings as  
\begin{equation}
\tilde{m}_{\mathrm{save},i,k} \triangleq  (t_{k+1}-t_{k}) \frac{P_{\mathrm{save},i}(t_k)}{Q_{\mathrm{LHV}}\eta_\mathrm{engine}}. \label{eq:fuel_integral_tk-tk+1_approximate}
\end{equation}
The error introduced by this approximation is defined as
\begin{align}
    E_{i,k}\triangleq \tilde m_{\mathrm{save},i,k}-m_{\mathrm{save},i,k}.\label{eq:error_integral_def}
\end{align}
To ensure the reliability of our approach, in the following, we derive a theoretical upper bound for this error.

\subsection{Guaranteed error bound for quantification of fuel savings in a platoon}
Our goal in this section is to obtain a guaranteed bound on the absolute value of the fuel savings approximation error (i.e., an upper bound of $|E_{i,k}|$), by utilizing the bounds that we obtained in Theorem~\ref{Theorem-7}.

As a first step, we want to establish a bound on the difference between the velocity at a given simulation time $t_k$ and the velocity at time any time $t\in [t_k,t_{k+1})$.

Similar to (\ref{eq:di-equality}), we have 
\begin{align}
v_{i}(t) & =z_{i}^{\top}x(t),\label{eq:vi-equality}
\end{align}
where $z_{i}\in{\mathbb{R}}^{3+6n}$ is a vector with the $j$th entry given as 
\begin{align}
z_{i,j} & \triangleq\begin{cases}
1, & {\mathrm{if}}\,\,j=6i + 2,\\
0, & {\mathrm{otherwise}},
\end{cases}\quad j\in\{1,2,\ldots,3+6n\}.\label{eq:zij-def}
\end{align}
By following the techniques introduced in Section~\ref{PSA} and  utilizing \eqref{eq:vi-equality} together with Theorem~\ref{Theorem-7}, we obtain a result that provides a bound on the deviation of $v_i(t)$ from $v_i(t_k)$. Before we state this result, we summarize the conditions of Theorem~2 in the following assumption, which will be used throughout the section. 
\begin{assumption} \label{Assumption-Conditions-of-Theorem}
    Considering the linear vehicle
platoon model (\ref{eq:linear-system}) with communication attempt
time instants ${\mathcal{S}}=\{s_{0},s_{1},s_{2},\ldots\}$ and the
time-discretized braking model (\ref{eq:discrete-brake-model-1-revised}), \eqref{eq:discrete-brake-model-2-revised}, it holds that $\mu(\tilde{A})>0$ for $\tilde{A}$ defined in \eqref{eq:mu-Atilde-def},  ${\mathcal{S}}\subseteq{\mathcal{T}}$ with   ${\mathcal{T}}\triangleq\{t_{0},t_{1},t_{2},\ldots\}$, and there exists $\alpha>0$
such that (\ref{eq:tk-condition7}) holds for every $k\in{\mathbb{N}}_{0}$.
\end{assumption}

\begin{theorem} \label{Theorem-vit}
Suppose Assumption~\ref{Assumption-Conditions-of-Theorem} holds. Then for every $k\in{\mathbb{N}}_{0}$ and every $i\in\{2,3,\ldots,n\}$, it holds that
\begin{align}
|v_{i}(t)-v_{i}(t_{k})| & \leq \frac{\psi}{\varphi}\alpha,\quad t\in[t_{k},t_{k+1}),\label{eq:vi-bound}
\end{align}
where
\begin{align}
\psi \triangleq \max_{i\in\{2,\ldots n\}}\|z_{i}^{\top}(A_\mathrm{c}b_{1}^{\top}+B_\mathrm{c}b_{2}^{\top})\|, 
\label{eq:varpsi}
\end{align}
and $\varphi$ is given in (\ref{eq:varphi_2}).
\end{theorem}
\begin{proof}
The differences in velocity for  $t\in [t_k, t_{k+1})$ can be calculated by applying the Fundamental Theorem of Calculus as 
\begin{align}
|v_i(t)-v_i(t_k)| = \left|\int_{t_k}^{t}\frac{\mathrm{d}}{\mathrm{d}s}v_i(s)\mathrm{d}s\right|\leq\int_{t_k}^{t}\left|\frac{\mathrm{d}}{\mathrm{d}s}v_i(s)\right|\mathrm{d}s. \label{eq:velocity_int}
\end{align}
 By using \eqref{eq:linear-system}, \eqref{eq:x-xtilde-relation}, and \eqref{eq:vi-equality}, we get
\begin{align}
\frac{\mathrm{d}}{\mathrm{d}s}v_i(s) &= z_i^\top \dot{x}(s) = z_i^\top (A_\mathrm{c}x(s)+B_\mathrm{c}u(s))  \nonumber \\ 
 &=z_{i}^{\top}(A_\mathrm{c}b_{1}^{\top}\tilde{x}(s)+B_\mathrm{c}b_{2}^{\top}\tilde{x}(s)) \nonumber \\ 
&= z_{i}^{\top}(A_\mathrm{c}b_{1}^{\top}+B_\mathrm{c}b_{2}^{\top})\tilde{x}(s). \label{eq:derivative_di_7}
\end{align}
Then by applying Cauchy-Schwarz inequality, we obtain 
\begin{align}
\left|\frac{\mathrm{d}}{\mathrm{d}s}v_i(s)\right|\leq \|z_{i}^{\top}(A_\mathrm{c}b_{1}^{\top}+B_\mathrm{c}b_{2}^{\top})\|\|\tilde{x}(s)\|. \label{eq:cauchy-schwarz inequality theorem3_old}
\end{align}
In this equation, we want to bound the $\|\tilde{x}(s)\|$ term. Note that \eqref{eq:lifted-system} implies that $\tilde{x}(s)=e^{\tilde{A}(s-t_k)}\tilde{x}(t_k)$ for $s\in[t_k,t_{k+1})$. Therefore, by using the property of the logarithmic matrix norm (see Proposition 2.1 in \cite{soderlind2006logarithmic}), we have (\ref{eq:tildexs}).
By substituting (\ref{eq:tildexs}) to (\ref{eq:cauchy-schwarz inequality theorem3_old}) and using the constant $\varphi$, we get 
\begin{align}
\left|\frac{\mathrm{d}}{\mathrm{d}s}v_i(s)\right| &\leq \|z_{i}^{\top}(A_\mathrm{c}b_{1}^{\top}+B_\mathrm{c}b_{2}^{\top})\| e^{\mu(\tilde{A})(s-t_k)}\|\tilde{x}(t_k)\| \nonumber \\
& \leq \psi e^{\mu(\tilde{A})(s-t_k)}\|\tilde{x}(t_k)\|.
\label{eq:cauchy-schwarz inequality theorem7}
\end{align}
Now, by \eqref{eq:velocity_int} and (\ref{eq:cauchy-schwarz inequality theorem7}), we arrive at
\begin{align}
&|v_i(t)-v_i(t_k)| \leq \int_{t_k}^{t}\psi e^{\mu(\tilde{A})(s-t_k)}\|\tilde{x}(t_{k})\|\mathrm{d}s \nonumber \\
&\quad =\psi\|\tilde{x}(t_{k})\|\int_{t_k}^{t}e^{\mu(\tilde{A})(s-t_k)}\mathrm{d}s = \frac{\psi\|\tilde{x}(t_{k})\|} 
{\mu(\tilde{A})}(e^{\mu(\tilde{A})(t-t_k)}-1).
\label{eq:vit_minus_vitk}
\end{align}
Furthermore, for $t\in[t_{k},t_{k+1})$, we have $t-t_{k}\leq t_{k+1}-t_{k}$. Therefore, by (\ref{eq:tk-condition7}), we obtain (\ref{eq:alpha_constraint_7}).
It then follows from (\ref{eq:vit_minus_vitk}) and (\ref{eq:alpha_constraint_7}) that
\begin{align*}
|v_i(t) - v_i(t_k)| \leq \frac{\psi\|\tilde{x}(t_{k})\|}{\mu(\tilde{A})} \frac{\mu(\tilde{A})\alpha}{\varphi\|\tilde{x}(t_{k})\|} = \frac{\psi}{\varphi}\alpha,
\end{align*}
which completes the proof.
\end{proof}

In the following we will use Theorems~\ref{Theorem-7} and \ref{Theorem-vit} to derive an upper bound on the absolute value of approximation error for fuel savings. In our derivation, we use the following definitions 
\begin{align}
\mathcal{S}_{v,i,k} &\triangleq 3  \left(v_i(t_k)+\frac{\psi}{\varphi}\alpha \right)^2 \left(C_{\mathrm{B}}- C_{\mathrm{D},i}(d_i(t_k) - \alpha)\right), \label{eq:S_vk} \\ 
\mathcal{S}_{d,i,k} &\triangleq v^3_i(t_k) C_{\mathrm{D},i}'(d_i(t_k)-\alpha),\label{eq:S_dk}   
\end{align}
where $C'_{\mathrm{D},i}$ denotes the derivative of $C_{\mathrm{D},i}$, that is, 
\begin{align}
C_{\mathrm{D},i}'(y)\triangleq \left. \frac{\partial C_{\mathrm{D},i}(d)}{\partial d}\right |_{d=y},\quad y\in \mathbb{R}. \label{eq: CDprime}
\end{align}
Furthermore, we define $C''_{\mathrm{D},i}$ to be the second derivative of $C_{\mathrm{D},i}$, that is,
\begin{align}
C_{\mathrm{D},i}''(y)\triangleq \left. \frac{\partial^2 C_{\mathrm{D},i}(d)}{\partial^2 d}\right |_{d=y},\quad y\in \mathbb{R}. \label{eq: CDprimepriime}
\end{align}

The following assumption specifies a minimum distance threshold $\underline{d}$ above which the aerodynamic drag coefficient $C_{\mathrm{D},i}$ satisfies inequalities with respect to its first two derivatives. 
\begin{assumption} \label{AssumptionCD}
There exists $\underline{d}\in (0,d_{\max})$ such that $C'_{\mathrm{D},i}(d)>0$  and $C''_{\mathrm{D},i}(d)<0$ hold for all $d \in [ \underline{d},d_{\max})$.  
\end{assumption}

The inequalities $C'_{\mathrm{D},i}(d)>0$ and $C''_{\mathrm{D},i}(d)<0$ in Assumption~\ref{AssumptionCD} guarantee that $C_{\mathrm{D},i}(d)>0$ increases with respect to $d$ with a diminishing rate. We note that the experimental results in \cite{hussein2021} can be used for determining numerical values for $\underline{d}$ and $d_{\max}$ in this assumption. 

Before we state the main result of this section, we provide two technical lemmas.


\begin{lemma} \label{lemmavelocity}
There exist $\hat{t}\in (t_k,t)$ such that 
\begin{align}
|v_i^3(t)-v_i^3(t_k)|=3v_i^2(\hat{t})|v_i(t) - v_i(t_k)|. \label{eq:v-result}
\end{align}
\end{lemma}
\begin{proof} The result holds directly if $v_i(t_k) = v_i(t)$, since both sides of \eqref{eq:v-result} would be $0$. Now, consider the case where $v_i(t_k) < v_i(t)$. Define $f\colon\mathbb {R}\to \mathbb{R}$ by $f(v)\triangleq v^3$. 
Then by mean value theorem, there exists $\hat{v}\in (v_i(t_k),v_i(t))$ such that  
    \begin{align}
&v_i^3(t)-v_i^3(t_k)=f(v_i(t))-f(v_i(t_k)) \nonumber \\
&\quad = f'(\hat{v})(v_i(t)-v_i(t_k))=3\hat{v}^2(v_i(t) - v_i(t_k)). \label{eq:f-v-result}
\end{align}
By taking the absolute value of the left side and the far right side of \eqref{eq:f-v-result}, and observing that $\hat{v} \geq 0$, we obtain 
    \begin{align}
|v_i^3(t)-v_i^3(t_k)|&=3\hat{v}^2|v_i(t) - v_i(t_k)|. \label{eq:hatv-result}
\end{align}
Now, by continuity of $v_i(\cdot )$, there exists $\hat{t}\in (t_k,t)$ such that $v_i(\hat{t})=\hat{v}$. By substituting $\hat{v}$ with $v_i(\hat{t})$, we obtain \eqref{eq:v-result}. 
The case where  $v_i(t) < v_i(t_k)$ follows from an argument similar to the one that we used for the case where $v_i(t_k) <v_i(t)$. 
\end{proof}
\begin{lemma}\label{lemmaCD}
Suppose Assumptions~\ref{Assumption-Conditions-of-Theorem} and \ref{AssumptionCD} are satisfied. If $\underline{d}+\alpha \leq d_i(t_k)<d_{\max} -\alpha$, then for every $t\in(t_k, t_{k+1})$, there exists $\hat{t}\in (t_k,t)$ such that 
\begin{align}
|C_{\mathrm{D},i} (d_i(t))-C_{\mathrm{D},i} (d_i(t_k))|=C'_{\mathrm{D},i}(d_i(\hat{t}))|d_i(t)-d_i(t_k)|. \label{eq:CD-result}
\end{align}
\end{lemma}
\begin{proof}
Since Assumption~\ref{Assumption-Conditions-of-Theorem} is satisfied, by Theorem~\ref{Theorem-7}, we have \eqref{eq:di-bound} for $t \in (t_k, t_{k+1})$. Now, since $\underline{d} + \alpha \le d_i(t_k) < d_{\mathrm{max}} - \alpha$, we have that both $d_i(t_k)$ and $d_i(t)$ are inside the interval $[\underline{d}, d_{\mathrm{max}}]$. By applying the mean value theorem, there exists $\hat{d}  \in (d_i(t_k), d_i(t))$ such that 
\begin{align}
C_{\mathrm{D},i}(d_i(t)) - C_{\mathrm{D},i}(d_i(t_k)) = C_{\mathrm{D},i}^{\prime}(\hat{d})(d_i(t) - d_i(t_k)). \label{eq: CDdi(t)-CDdi(tk)1}
\end{align}
Taking the absolute value of both sides yields
\begin{align}
|C_{\mathrm{D},i}(d_i(t)) - C_{\mathrm{D},i}(d_i(t_k))| = |C_{\mathrm{D},i}^{\prime}(\hat{d})||d_i(t) - d_i(t_k)|. \label{eq: CDdi(t)-CDdi(tk)2}  
\end{align}
Noting that $d_i(t_k),d_i(t)\in[\underline{d}, d_{\mathrm{max}}]$ and  $\hat{d}  \in (d_i(t_k), d_i(t))$, we have $\hat{d}\in(\underline{d},d_\mathrm{max})$. Therefore, it follows from Assumption~\ref{AssumptionCD} that $C_{\mathrm{D},i}^{\prime}(\hat{d}) > 0$. Thus, $| C_{\mathrm{D},i}^{\prime}(\hat{d})|= C_{\mathrm{D},i}^{\prime}(\hat{d})$ and 
\begin{align}
|C_{\mathrm{D},i}(d_i(t)) - C_{\mathrm{D},i}(d_i(t_k))| = C_{\mathrm{D},i}^{\prime}(\hat{d})|d_i(t) - d_i(t_k)|. \label{eq: CDdi(t)-CDdi(tk)3}   
\end{align}
Furthermore, by the continuity of $d_i(\cdot)$, there exists $\hat{t} \in (t_k, t)$ such that $\hat{d} = d_i(\hat{t})$. Substituting $d_i(\hat{t})$ for $\hat{d}$ yields \eqref{eq:CD-result}, which completes the proof.
\end{proof}

We are now ready to provide the main result of this section.

\begin{theorem} \label{thm:fuel_error_single_step}
Suppose Assumptions~\ref{Assumption-Conditions-of-Theorem} and \ref{AssumptionCD} are satisfied. Given $k\in \mathbb{N}_0$, consider the simulation interval $[t_k, t_{k+1})$. If $\underline{d}+\alpha \leq d_i(t_k)<d_{\max} -\alpha$, then the fuel savings approximation error $E_{i,k}$ defined in \eqref{eq:error_integral_def} is bounded as
\begin{equation} 
|E_{i,k}| \le (t_{k+1}-t_k) \frac{\rho K_\mathrm{vfa} \alpha}{2 Q_{\mathrm{LHV}} \eta_{\mathrm{engine}}} \left( \frac{\psi}{\varphi}\mathcal{S}_{v,i,k} + \mathcal{S}_{d,i,k} \right), \label{eq:fuel_bound_single}
\end{equation}
where $\mathcal{S}_{v,i,k}$ and $\mathcal{S}_{d,i,k}$ are given in \eqref{eq:S_vk} and \eqref{eq:S_dk}.
\end{theorem}
\begin{proof}
By \eqref{eq:Psave-expanded}, \eqref{eq:fuel_integral_tk-tk+1}, and \eqref{eq:error_integral_def}, we have 
\begin{align}
    E_{i,k} &=  \frac{1}{Q_{\mathrm{LHV}}\eta_\mathrm{engine}} \int_{t_k}^{t_{k+1}} P_{\mathrm{save},i}(\tau)-P_{\mathrm{save},i}(t_k)\mathrm{d}\tau \nonumber \\ 
    &=  \frac{\rho K_{\mathrm{vfa}}}{2Q_{\mathrm{LHV}}\eta_\mathrm{engine}} \int_{t_k}^{t_{k+1}}\left(C_{\mathrm{B}}-C_{\mathrm{D},i}(d_i(\tau))\right)v_i^3(\tau) \nonumber \\ 
    & \qquad-\left(C_{\mathrm{B}}-C_{\mathrm{D},i}(d_i(t_k))\right)v_i^3(t_k)\mathrm{d}\tau. \label{eq:E-integral-form_2}
    \end{align}
By taking the absolute value of both sides and using the triangle inequality,  we obtain
\begin{align}
|E_{i,k}| &\leq \frac{\rho K_\mathrm{vfa}}{2Q_\mathrm{LHV}\eta_\mathrm{engine}} \int_{t_k}^{t_{k+1}}\bigg|\left(C_{\mathrm{B}}-C_{\mathrm{D},i}(d_i(t))\right)v^3_i(t)\nonumber \\ 
&\quad \, -\left(C_{\mathrm{B}}-C_{\mathrm{D},i}(d_i(t_k))\right)v^3_i(t_k)\bigg| \mathrm{d}t. \label{eq:E_kstep1}
\end{align}
Inside the expression in absolute value on the right hand side of (\ref{eq:E_kstep1}), we add and subtract the same term $\left(C_{\mathrm{B}}-C_{\mathrm{D},i}(d_i(t))\right)v^3_i(t_k)$ to get
\begin{align}
|E_{i,k}| &\leq \frac{\rho K_\mathrm{vfa}}{2Q_\mathrm{LHV}\eta_\mathrm{engine}} \int_{t_k}^{t_{k+1}}\bigg|\left(C_{\mathrm{B}}-C_{\mathrm{D},i}(d_i(t))\right)v^3_i(t)\nonumber \\ 
&\quad \, -\left(C_{\mathrm{B}}-C_{\mathrm{D},i}(d_i(t))\right)v^3_i(t_k)\nonumber \\
&\quad \, +\left(C_{\mathrm{B}}-C_{\mathrm{D},i}(d_i(t))\right)v^3_i(t_k)\nonumber \\
&\quad \, -\left(C_{\mathrm{B}}-C_{\mathrm{D},i}(d_i(t_k))\right)v^3_i(t_k)\bigg| \mathrm{d}t \nonumber \\
&= \frac{\rho K_\mathrm{vfa}}{2Q_\mathrm{LHV}\eta_\mathrm{engine}}  \nonumber \\
&\quad \cdot \int_{t_k}^{t_{k+1}}\bigg|\big(C_{\mathrm{B}}-C_{\mathrm{D},i}(d_i(t))\big)\big(v^3_i(t)- v^3_i(t_k)\big)\nonumber \\
&\qquad \, + v_i^3(t_k)\big(C_{\mathrm{D},i}(d_i(t_k))-C_{\mathrm{D},i}(d_i(t))\big)\bigg| \mathrm{d}t \nonumber \\
&\leq \frac{\rho K_\mathrm{vfa}}{2Q_\mathrm{LHV}\eta_\mathrm{engine}}  \nonumber \\
&\quad \cdot \int_{t_k}^{t_{k+1}}\big(C_{\mathrm{B}}-C_{\mathrm{D},i}(d_i(t))\big) \big|v^3_i(t)- v^3_i(t_k)\big|\nonumber \\
&\qquad \, + v_i^3(t_k)|C_{\mathrm{D},i}(d_i(t))-C_{\mathrm{D},i}(d_i(t_k))| \mathrm{d}t,
\label{eq:E_kstep1.5}
\end{align}
where to obtain the last inequality, we also used the fact that $C_{\mathrm{B}}-C_{\mathrm{D},i}(d_i(t))\geq0$ and that the vehicles are always in forward motion, and hence, $v_i(t_k)\geq 0$.

Then for the terms $\big|v^3_i(t)- v^3_i(t_k)\big|$ and $|C_{\mathrm{D},i}(d_i(t))-C_{\mathrm{D},i}(d_i(t_k))|$ in (\ref{eq:E_kstep1.5}), we apply Lemmas~\ref{lemmavelocity} and \ref{lemmaCD}. It follows from those lemmas that for $t \in [t_k, t_{k+1})$, there exist $\hat{t}_1,\hat{t}_2\in (t_k,t)$ such that
\begin{align}
|E_{i,k}| &\leq \frac{\rho K_\mathrm{vfa}}{2Q_\mathrm{LHV}\eta_\mathrm{engine}} \nonumber \\ 
&\quad \cdot \int_{t_k}^{t_{k+1}} \bigg( (3v^2_i(\hat{t}_1)\big(C_{\mathrm{B}}-C_{\mathrm{D},i}(d_i(t))\big)\big|v_i(t)-v_i(t_k)\big| \nonumber \\ 
&\quad + v^3_i(t_k)C'_{\mathrm{D},i}(d_i(\hat{t}_2))\big|d_i(t)-d_i(t_k)\big| \bigg) \, \mathrm{d}t. 
\label{eq:E_k step2}
\end{align}
Now, by Theorem~\ref{Theorem-7} and Theorem~\ref{Theorem-vit}, the differences of inter-vehicle distance and velocity between $t$ and $t_k$ are bounded and the inequalities (\ref{eq:di-bound}) and (\ref{eq:vi-bound}) hold. A consequence of   (\ref{eq:vi-bound}) is that
\begin{align}
    |v_i(\hat{t}_1)| \le |v_i(t_k)| + \frac{\psi}{\varphi}\alpha, \label{velocity-bound}
\end{align} 
since $\hat{t}_1\in(t_k,t_{k+1})$. 
Furthermore, since Assumption~\ref{AssumptionCD} holds, $C_{\mathrm{D},i}(d)$ is an increasing function, and $C'_{\mathrm{D},i}(d)$ is a decreasing function of $d$. Therefore, considering $t\in [t_k,t_{k+1}), \hat{t}_2\in(t_k,t_{k+1})$, we have
 \begin{align}
     C_{\mathrm{B}}-C_{\mathrm{D},i}(d_i(t)) &\leq C_{\mathrm{B}} - C_{\mathrm{D},i}(d_i(t_k)-\alpha), \label{CD-bound-1} \\
 C'_{\mathrm{D},i}(d_i(\hat{t}_2))&\leq C'_{\mathrm{D},i}(d_i(t_k)-\alpha), \label{CD-bound-2}
  \end{align}
where we also used the fact that $d_i(t)\geq d_i(t_k)-\alpha$ and $d_i(\hat{t}_2)\geq d_i(t_k)-\alpha$, which are consequences of (\ref{eq:di-bound}).

 Substituting the bounds in \eqref{velocity-bound}--\eqref{CD-bound-2} into \eqref{eq:E_k step2} yields
\begin{align}
|E_{i,k}| &\leq \frac{\rho K_\mathrm{vfa} \alpha}{2Q_\mathrm{LHV}\eta_\mathrm{engine}} \nonumber \\ 
&\quad \cdot \int_{t_k}^{t_{k+1}}  3 \frac{\psi}{\varphi}\left(v_i(t_k)+\frac{\psi}{\varphi}\alpha \right)^2 \left(C_{\mathrm{B}}- C_{\mathrm{D},i}(d_i(t_k) - \alpha)\right) \nonumber \\ 
&\quad +  \left(v^3_i(t_k)C'_{\mathrm{D},i}(d_i(t_k) - \alpha)\right) \, \mathrm{d}t. \label{eq:E_k step3}
\end{align}
Next, we calculate the integral in (\ref{eq:E_k step3}) and substitute (\ref{eq:S_vk}) and (\ref{eq:S_dk}) to finally  obtain (\ref{eq:fuel_bound_single}), which completes the proof.
\end{proof}

The bound presented in Theorem~\ref{thm:fuel_error_single_step} is used in Section~\ref{sec:tradeoff} to find error bounds for simulation-based calculation of average fuel savings for a vehicle platoon.

\begin{remark}
For simplicity of the presentation, for safety and fuel efficiency analysis, we focused on the setting where the vehicles have the same characteristic time constant $\tau_{\mathrm{d}}$ and same control parameters $k_{\mathrm{p}}$ and $k_{\mathrm{d}}$. This setting can be extended by considering different time constants $\tau_{\mathrm{d},i}$ for $i\in\{0,1,\ldots,n\}$ and control parameters $k_{\mathrm{p},i}$ and $k_{\mathrm{d},i}$ for $i \in\{1,2,\ldots,n\}$. In this case, the matrices $A_{\mathrm{c}}$ and $B_{\mathrm{c}}$ defined in Section~\ref{sec:TheOverallDynamics} will have diagonal components $M$, $N_1,N_2,\ldots,N_n$, $E$ where $\tau_{\mathrm{d}}$ in $M$ and $E$ is replaced with $\tau_{\mathrm{d},0}$ and $\tau_{\mathrm{d}}$, $k_{\mathrm{p}}$, and $k_{\mathrm{d}}$  in $N_i$ are replaced with $\tau_{\mathrm{d},i}$, $k_{\mathrm{p},i}$ and $k_{\mathrm{d},i}$. With this new setting, Theorems~\ref{Theorem-Main}--\ref{thm:fuel_error_single_step} still hold true with the new $A_{\mathrm{c}}$ and $B_{\mathrm{c}}$ matrices, since the overall dynamics still follow the linear system \eqref{eq:linear-system}.
\end{remark}

\section{Simulation Studies on Safety and Fuel Efficiency}

\begin{figure}[t]
\centering 
\includegraphics[width=1\columnwidth]{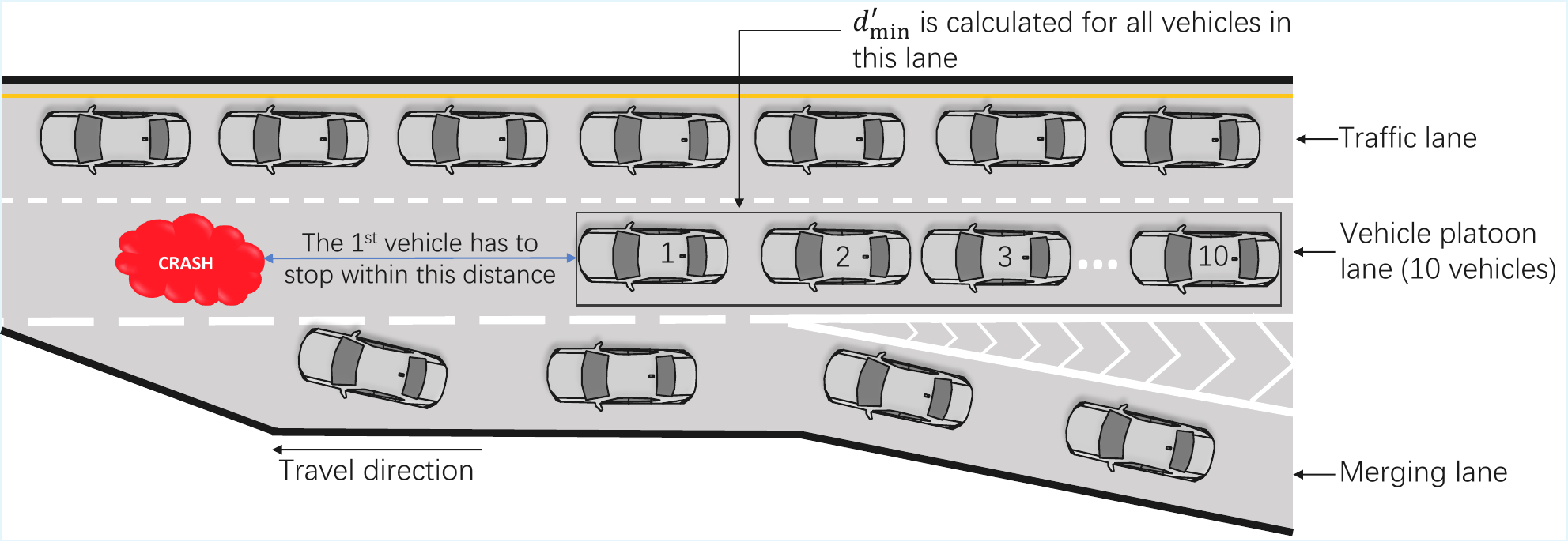}
\caption{Illustration of a sudden braking scenario on a highway under communication losses. The vehicle platoon is prevented from changing lanes due to congestion in the top lane and it has to make a sudden braking in response to a crash that was caused by merging ahead.}
 \label{Figure-new-illustration-vp}
\end{figure}

\label{sec:Numerical-Example}
In this section, we demonstrate our simulation approach by considering
the platoon dynamics described in Section~\ref{sec:Vehicle-Platoon-Dynamics}. We consider a platoon of $n=10$ vehicles shown in Figure~\ref{Figure-new-illustration-vp}. In the scenario that we consider, there is a merging-related crash occurring in the lane of the platoon ahead of the leader vehicle. The top lane is congested with heavy traffic, and therefore, changing a lane is impossible for the vehicle platoon. Since there is a crash ahead, the platoon has to make a sudden brake. By applying our proposed simulation methods to this critical scenario, we aim to verify whether it is possible to maintain enough inter-vehicular distance to avoid a crash in the platoon.

In our scenario, we set the characteristic constant as $\tau_{{\mathrm{d}}}=1.5$, time gap as
$h=0.6$, the length of all vehicles as $L_{i}=4.7\,{\mathrm{m}}$,
initial speed of all vehicles as $v_{i}(0)=30\,{\mathrm{m/s}}$, and
initial acceleration of all vehicles as $a_{i}(0)=0\,{\mathrm{m/s^{2}}}$.
Initial positions are set as $p_{i}(0)=200-(5+hv_{i}(0))i$ for $i\in\{1,2,\ldots,n\}$.
We consider the case where the vehicles communicate at every $T=0.1\,{\mathrm{s}}$.
We set the standstill distance as $r=5\,{\mathrm{m}}$. Notice that
the initial distance between the vehicles is shorter than $r+hv_{i}(0)$
by an amount equal to the length of the vehicles. A string-stable
platoon's control system is capable of handling this situation when
there are no packet losses and sudden brakes. 

In our simulations, we consider the case where a brake is suddenly introduced after
time $t_{{\mathrm{brake}}}=5\,{\mathrm{s}}$ in the form of (\ref{eq:discrete-brake-model-1-revised}), \eqref{eq:discrete-brake-model-2-revised}
where $\gamma=1.2$ and $\eta=0.1$. The brake is introduced by the leader vehicle to avoid a crash ahead and it is applied until the
simulation end time $t_{{\mathrm{end}}}=25\,{\mathrm{s}}$. Along with
the sudden brake, we consider the case where communication between
the vehicles faces packet losses. In particular, we consider two different
models for packet losses: (1) consecutive packet losses and (2) random
packet losses. 

For each packet loss model, we attempt to approximate $d_{\min}^{*}$ (the minimum
of minimum inter-vehicle distances) by using the distance data $d_{i}(t_{k})$
obtained through our discrete-time simulation approach. Specifically,
we use this data to calculate ${d_{\min}}$ according to (\ref{eq:dmin-discrete}).
Notice that ${d_{\min}}$ is only an approximation of the true
value $d_{\min}^{\star}$ given in (\ref{eq:d-min-def}). However,
our simulation guarantees that $|d_{\min}^{\star}-d_{\min}|\leq\alpha$.
We follow the approach in Section~\ref{subsection:design of simulation times} to choose the
consecutive times $t_{k}$, $k\in{\mathbb{N}}_{0}$, where we evaluate
the state of the vehicle platoon dynamics. As both Theorems~\ref{Theorem-Main} and \ref{Theorem-7}
imply $|d_{i}(t)-d_{i}(t_{k})|\leq\alpha$, $t\in[t_{k},t_{k+1})$,
our approach guarantees that $|d_{\min}^{\star}-d_{\min}|\leq\alpha$ in both cases. In our simulations we choose $\alpha=1$.
As a result, observing $d_{\min}>1$ indicates that a collision between
vehicles is definitely avoided. However, a value of $d_{\min}\in(0,1)$
does not provide any safety guarantees. To further enhance the efficiency in our numerical analysis, we stop simulations when a collision occurs or vehicles come to a stop. As a quantitative result from our simulations, we obtain
\begin{align}
d^{\prime}_{\min} \triangleq \min_{i\in\{2,\ldots,n\}}\min_{k\in\{0,\ldots,k^{\prime}_{{\mathrm{end}}}\}}d_{i}(t_{k}),
\end{align}
where $k^{\prime}_{\mathrm{end}}$ denotes the total simulation step count for each simulation, which is determined either when the simulation time reaches $t_\mathrm{end}$, or when the safety of vehicle platoon is violated, i.e., when vehicle velocities reach $0$ or a collision occurs.

\subsection{$\ell$-consecutive packet losses} 
\label{sec:ell-consecutive-Numerical}

First, we consider the scenario where there are $\ell=7$ packet losses
after every successful communication attempt. For this scenario Figure~\ref{Figure-dmin-kp-kd} shows how $d^{\prime}_{\min}$ changes with respect to control parameters $k_{{\mathrm{p}}}$
and $k_{{\mathrm{d}}}$ for Theorem~\ref{Theorem-7}. The values obtained with Theorem~\ref{Theorem-Main} are similar and they differ from the values of Theorem~\ref{Theorem-7} at most by $0.002\,\mathrm{m}$ as shown in Figure~\ref{Figure-dmin-difference}. We observe in most cases that for large values of $k_{{\mathrm{d}}}$
and small values of $k_{{\mathrm{p}}}$, the value of $d'_{\min}$ is
larger. This is consistent with the string stability framework developed
in \cite{dolk2017}. In that work the condition $k_{{\mathrm{d}}}>\tau_\mathrm{d} k_{{\mathrm{p}}}$
was identified as a condition for string stability. We also note that $d^{\prime}_{\min}=0$ 
for the cases where $k_{{\mathrm{p}}}=0.2$, $k_{{\mathrm{d}}} \leq 0.6$ and  $k_{{\mathrm{p}}}=0.25$, $k_{{\mathrm{d}}} \leq 0.65$ or $1.15\leq k_{{\mathrm{d}}} \leq 1.25$. These indicate a collision among vehicles. Additionally, Figure~\ref{Figure-steps} shows that Theorem~\ref{Theorem-7} requires approximately ten times fewer total simulation steps ($k'_{\mathrm{end}}$) compared to Theorem~\ref{Theorem-Main}. This indicates the advantage of Theorem~\ref{Theorem-7}. In Figure~\ref{Figure-steps}, when $k_{\mathrm{p}}$ is fixed, an increase in $k_{\mathrm{d}}$ occasionally results in a large increase in $k^{\prime}_{\mathrm{end}}$ values. This is because the simulation runs for longer time (without being stopped) so that it results in larger numbers of total simulation steps $k^{\prime}_\mathrm{end}$. 

\begin{figure}[t]
\centering 
\includegraphics[width=0.9\columnwidth]{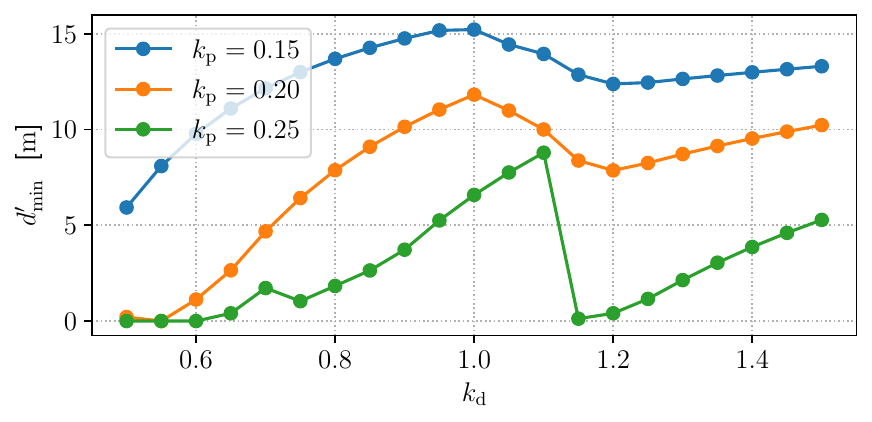}
\caption{Minimum of minimum inter-vehicle distances ($d^{\prime}_{\min}$) of Theorem~\ref{Theorem-7} with respect
to control parameters ($k_{{\mathrm{p}}}$ and $k_{{\mathrm{d}}}$).}
 \label{Figure-dmin-kp-kd}
\end{figure}

\begin{figure}[t]
\centering 
\includegraphics[width=0.9\columnwidth]{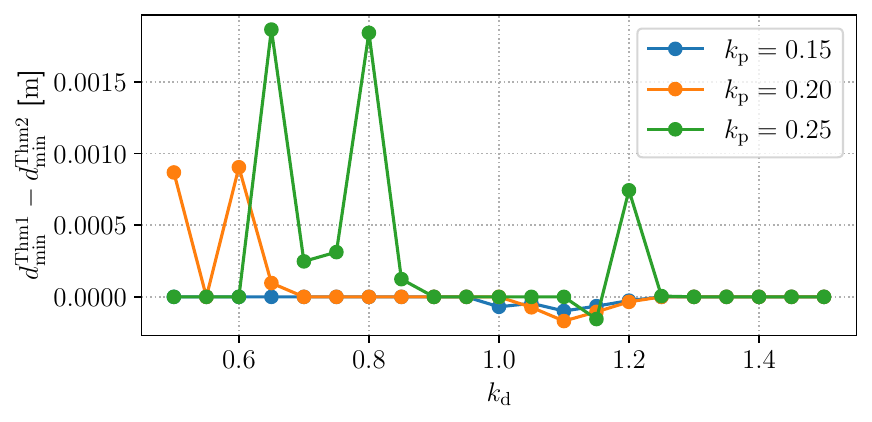}
\label{fig:dmin_thm2}
\caption{Difference of $d^{\prime}_{\min}$ obtained by using the simulation approaches in Theorem~\ref{Theorem-Main} and in Theorem~\ref{Theorem-7} with respect to control parameters ($k_{{\mathrm{p}}}$ and $k_{{\mathrm{d}}}$).}
 \label{Figure-dmin-difference}
\end{figure}

\begin{figure}[t]
\centering 
\subfloat{
\includegraphics[width=0.9\linewidth]{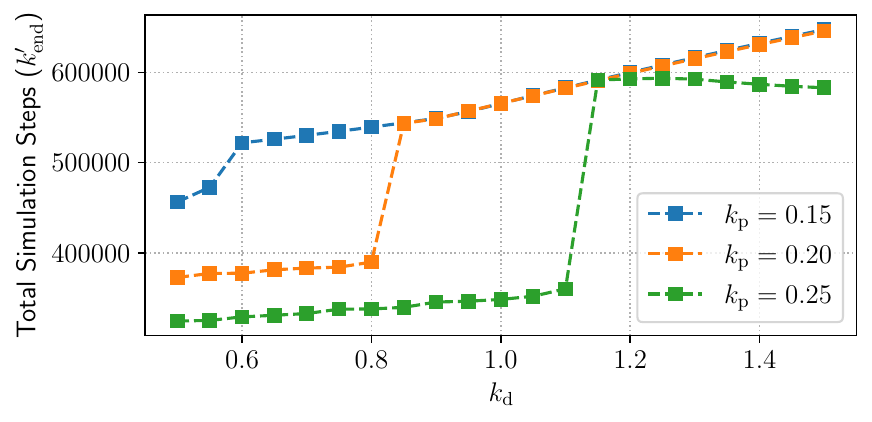}
\label{fig:steps_thm1} 
} \\ 
\subfloat{
\includegraphics[width=0.9\linewidth]{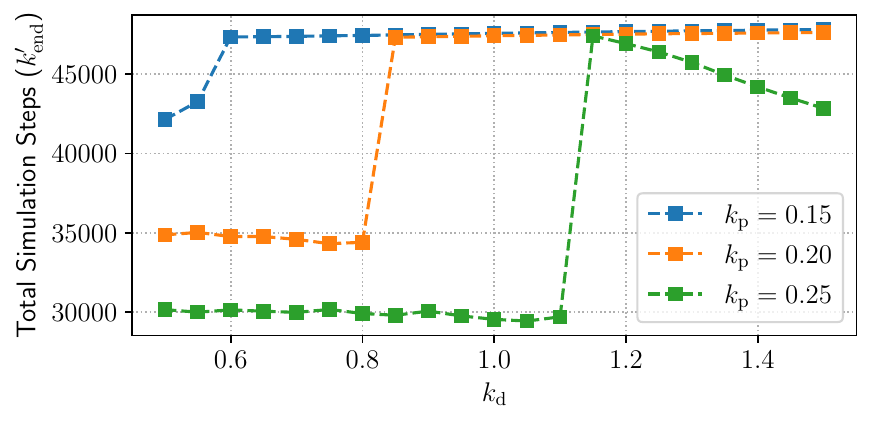}
\label{fig:steps_thm2}
}
\caption{Plots of total simulation steps $k'_{\mathrm{end}}$ with respect to control parameters $k_{{\mathrm{p}}}$ and $k_{{\mathrm{d}}}$ for Theorem~\ref{Theorem-Main} (top) and Theorem~\ref{Theorem-7} (bottom).}
 \label{Figure-steps}
\end{figure}

\subsection{Random packet losses} \label{Random-packet-losses}
Second, we consider random packet losses with a relatively large packet loss probability
$p=0.8$. Simulations under such a large packet loss probability are useful to understand the vehicle platoon's safety in worst-case scenarios that may even be caused intentionally by malicious actors. For the particular case of $p=0.8$, we run 10000 simulations by using the simulation approach described in Theorem~\ref{Theorem-7} for two settings. In the first setting, we have $k_{{\mathrm{p}}}=0.2$ and $k_{{\mathrm{d}}}=1.2$. In the second setting, we have $k_{{\mathrm{p}}}=0.25$ and $k_{{\mathrm{d}}}=1.2$. On a standard laptop with 16 GB Memory and Ryzen 7 5800H CPU, simulations for these two settings took around 30 hours in total. The histograms of $d'_{\min}$ are provided
in Figure~\ref{Figure-histogram}. We observe that for the first setting ($k_{{\mathrm{p}}}=0.2$ and $k_{{\mathrm{d}}}=1.2$), in some simulations
the packet loss pattern realizations result in safety issues, while in most of the realizations the inter-vehicle distances stay sufficiently large. In the second setting ($k_{{\mathrm{p}}}=0.25$ and $k_{{\mathrm{d}}}=1.2$), the obtained inter-vehicle distances are smaller and there are more critical safety issues. In the first setting with $k_{\mathrm{p}}=0.2,k_{\mathrm{d}}=1.2$, having a large number of simulations is important to see potential issues, because as the histogram indicates, in a typical simulation $d'_{\min}$ appears to be large enough (larger than $8$). However, although rarely, there are packet loss realizations that result in a collision. 

\begin{figure}[t]
\centering
\subfloat{
\includegraphics[width=0.9\columnwidth]{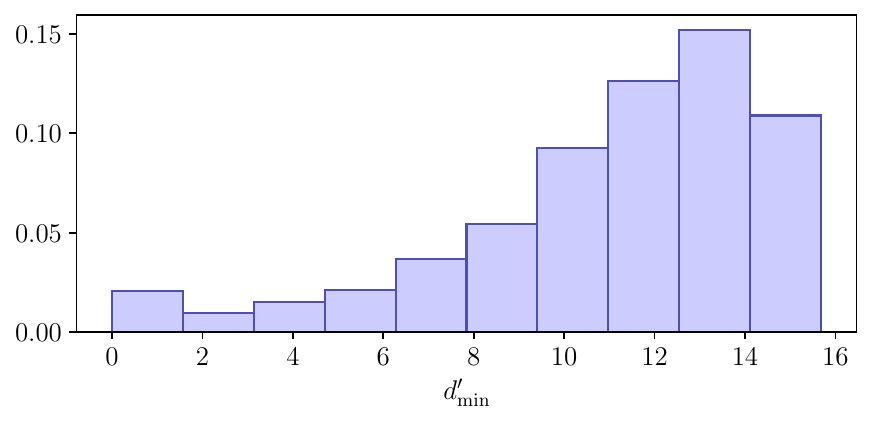}
\label{fig:histogram_Theorem7_kp_0.2_kd_1.2.pdf}
} \\
\subfloat{ 
\includegraphics[width=0.9\columnwidth]{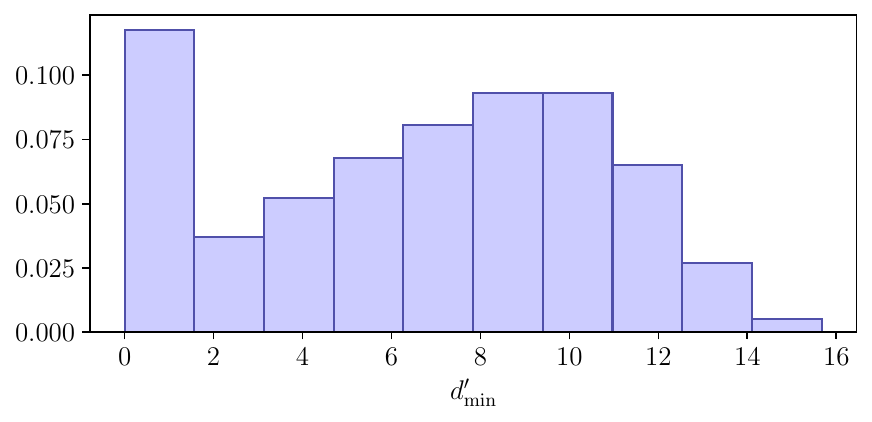}
\label{fig:histogram_Theorem7_kp_0.25_kd_1.2.pdf}
}
\caption{Plot of the minimum of minimum inter-vehicle distances ($d'_{\min}$)
over 10000 simulations for Theorem~\ref{Theorem-7} under random packet losses with respect to control parameters $k_\mathrm{p}$ and $k_\mathrm{d}$. (Top: $k_{\mathrm{p}}=0.2,k_{\mathrm{d}}=1.2$, Bottom: $k_{\mathrm{p}}=0.25,k_{\mathrm{d}}=1.2$) } 
 \label{Figure-histogram}
\end{figure}

To further evaluate $d'_{\min}$ under random packet losses, we run extensive simulations by varying packet loss probability $p$ from $0.6$ to $0.9$ with an increase of $0.01$ for control parameters values $k_p=0.2$ $k_\mathrm{d}=0.7$. For each value of $p$, we run $100$ simulations. The results are shown in Figure~\ref{Figure-histogram-different-p}, where the solid line represents the median value over all 100 simulations, while the shaded area represents the Interquartile Range (IQR), spanning from the 25th to the 75th percentile. This figure illustrates the resulting trend of $d'_{\min}$ with respect to the packet loss probability $p$. In particular, we notice that $d'_{\min}$ decreases sharply as $p$ exceeds $0.7$ and its lower quartile converges to 0 when $p = 0.82$. The result indicates a sudden decline in safety over a relatively small increase in packet loss probability.

\begin{figure}[t]
\centering 
\includegraphics[width=0.9\columnwidth]{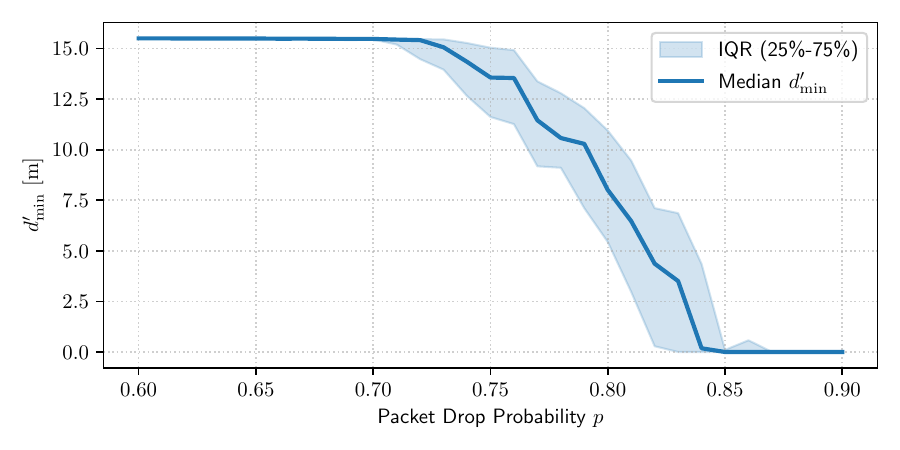}
\caption{Plot of the minimum of minimum inter-vehicle distances ($d'_{\min}$) with respect to the packet drop probability $p \in [0.6, 0.9]$. For each $p$ 100 simulations are conducted using the method in Theorem~\ref{Theorem-7}.}
 \label{Figure-histogram-different-p}
\end{figure}

\begin{figure}[t]
\centering 
\includegraphics[width=1\columnwidth]{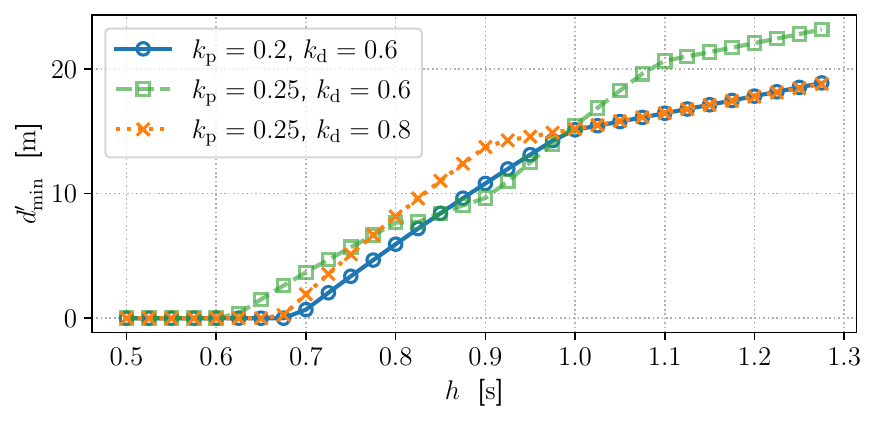}
\caption{Plot of $d'_\mathrm{min}$ with respect to time-gap $h$ for different control parameters ($k_\mathrm{p}$ and $k_\mathrm{d}$) under $l$-consecutive losses.}
 \label{d_prime_min vs h}
\end{figure}

\subsection{Effect of time-gap $h$ on safety}
Next, to investigate how the time-gap $h$ affects safety, we run extensive simulations using our simulation framework characterized by Theorem~\ref{Theorem-7}.

In particular, we consider $\ell$-consecutive packet losses with $\ell=7$ as described in Section~\ref{sec:ell-consecutive-Numerical} and simulate the vehicle platoon for different values of $h$ and different values of control parameters $k_{\mathrm{p}}$ and $k_{\mathrm{d}}$. Our simulation approach allows us to find critical values of $h$ for which the platoon faces a collision. 

We obtain Figure~\ref{d_prime_min vs h} which shows $d^{\prime}_{\mathrm{min}}$ with respect to $h$. We note that when the time gap is small such that $h <  0.675$, we have $d^{\prime}_{\mathrm{min}}=0$ for control parameters $k_{\mathrm{p}}=0.2, k_{\mathrm{d}}=0.6$ and $k_{\mathrm{p}}=0.25, k_{\mathrm{d}}=0.8$ indicating a crash (see blue and green curves). In addition, when $h <  0.625$, we obtain $d^{\prime}_{\mathrm{min}}=0$ indicating crash for $k_{\mathrm{p}}=0.25, k_{\mathrm{d}}=0.6$ (see red line).  We notice that $d^{\prime}_{\mathrm{min}}$ increases when $h$ increases, as expected. However as seen in Figure~\ref{d_prime_min vs h}, the relationship is not linear.


\begin{figure}[t]
\centering 
\includegraphics[width=1\columnwidth]{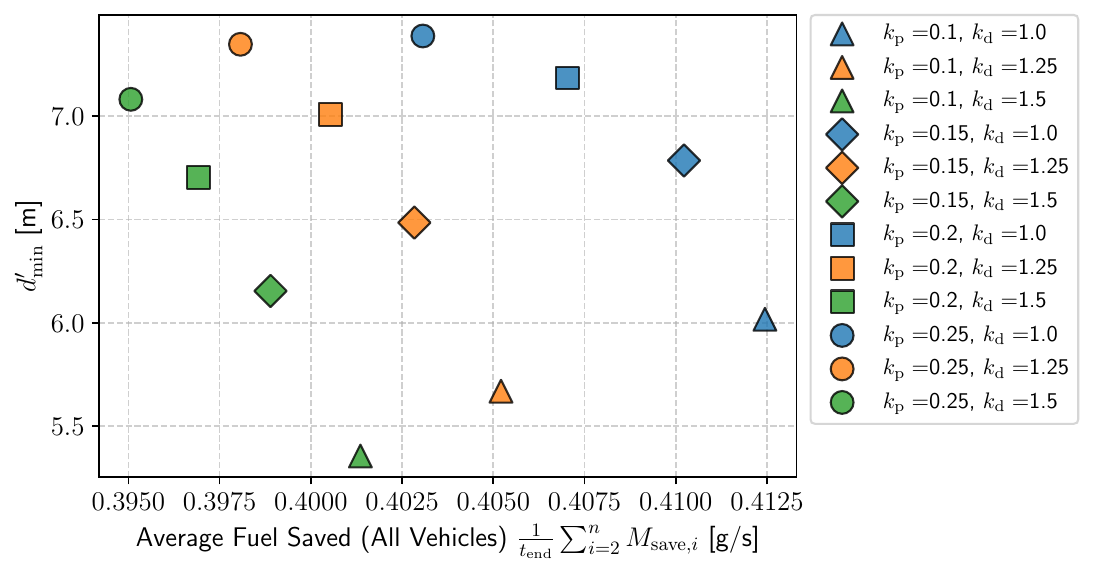}
\caption{Plot of $d'_\mathrm{min}$ versus average fuel saved for different control parameters ($k_\mathrm{p}, k_\mathrm{d}$). For fixed $k_\mathrm{p}$, smaller $k_\mathrm{d}$ enhances fuel economy without safety loss. However, for fixed $k_\mathrm{d}$, there is a trade-off in that larger $k_\mathrm{p}$ values achieve safer but less economic results.}
 \label{SummaryFuelvsSafety}
 
\end{figure}

\subsection{Fuel efficiency analysis and a trade-off with safety}
\label{sec:tradeoff}
Next, we investigate the fuel efficiency of the vehicle platoon. For our simulations, the parameters of the aerodynamic drag model and fuel consumption are set according to the empirical data for light-duty vehicles in \cite{hussein2021}. In particular, the baseline drag coefficient for a single vehicle is $C_{\mathrm{B}}= 0.367$, the air density is $\rho = 1.2\,\mathrm{kg/m^3}$, the vehicle frontal area is $K_{\mathrm{vfa}} = 2.3\,\mathrm{m^2}$, the engine efficiency is $\eta_\mathrm{engine} = 0.3$, and the fuel lower heating value is $Q_{\mathrm{LHV}} = 43000\,\mathrm{J/g}$. Additionally, the coefficients $\hat{w}_{i,j}$, $\tilde{w}_{i,j}$, for $j\in\{0,1,2,3\}$, $\underline{d}_i$, and $d_{\mathrm{max},i}$ in \eqref{eq:rational_poly} depend on the vehicle's specific position in the platoon as shown in Table~\ref{table:dragparameter}. We note that we apply the lead vehicle drag model of \cite{hussein2021} to $2$nd vehicle in the platoon, and our notation for vehicle indices differ from theirs in that in our notation, $d_i(t)$ is the inter vehicle distance between $(i-1)$th and $i$th vehicles. To determine the values of $\underline{d}_i$ for Assumption~\ref{AssumptionCD}, we simulate the numerical values of aerodynamic drag model. In particular, we obtain the values of $\underline{d}_i$ by numerically computing $C'_{\mathrm{D},i}(d)$ and $C''_{\mathrm{D},i}(d)$ for 100000 consecutive values of $d$ between $0.01\,\mathrm{m}$ and $100\,\mathrm{m}$, which allows us to find ranges of distance values that satisfy the conditions in Assumption~\ref{AssumptionCD}. 

\begin{table*}[]
    \centering
    \caption{Aerodynamic drag model parameters assigned to each vehicle}
    \label{tab:vehicle_parameters}
    \resizebox{\textwidth}{!}{
    \begin{tabular}{l cccc cccc c c}
        \toprule
        $\boldsymbol{i}$
        & $\boldsymbol{\hat{w}_{i,3}}$ & $\boldsymbol{\hat{w}_{i,2}}$ & $\boldsymbol{\hat{w}_{i,1}}$ & $\boldsymbol{\hat{w}_{i,0}}$ 
        & $\boldsymbol{\tilde{w}_{i,3}}$ & $\boldsymbol{\tilde{w}_{i,2}}$ & $\boldsymbol{\tilde{w}_{i,1}}$ 
        & $\boldsymbol{\tilde{w}_{i,0}}$ &
       $\boldsymbol{\underline{d}_i}$ &
        $\boldsymbol{d_{\mathrm{max},i}}$  \\
        \midrule
        \midrule
        $2$
        & $0.14725$ & $-0.21819$ & $-0.091455$ & $0.58344$ 
        & $0.14192$ & $-0.14349$ & $-0.28946$ & $0.97713$ & $1.932$  & $80.000$ \\
        $3$    
        & $0.08770$ & $-0.39570$ & $-0.11120$ & $1.75980$ 
        & $0.08380$ & $-0.15700$ & $-1.54380$ & $4.17810$ & $3.510$ & $56.362$ \\
        4$\sim$10
        & $0.14980$ & $1.17190$ & $-3.27850$ & $2.44940$ 
        & $0.13800$ & $2.15650$ & $-5.48820$ & $3.93920$ & $2.733$ & $80.639$ \\
        \bottomrule
    \end{tabular} \label{table:dragparameter}
    }
\end{table*}

To evaluate the fuel efficiency of the vehicle platoon and investigate its relationship with safety, we introduce a continuous velocity oscillation scenario for the vehicle platoon with $n=10$ vehicles as described in the beginning of this section. We set the total simulation time to be $300\,\mathrm{s}$, that is  $t_{\mathrm{end}}=300$, and the virtual reference vehicle to repeatedly accelerate and decelerate. At the first $5$ seconds, $v_0(t)$ is constant, the virtual reference starts to decelerate after $t=5\,\mathrm{s}$. The desired acceleration $u_0(t)$ is set to oscillate between a maximum acceleration of $\gamma = 5.0\,\mathrm{m/s^2}$ and a maximum deceleration of $\gamma = -5.0\,\mathrm{m/s^2}$. The maximum deceleration is triggered when $v_0(t)$ reaches $30\,\mathrm{m/s}$, and the maximum acceleration is triggered when $v_0(t)$ reaches $5\,\mathrm{m/s}$. So it allows us to evaluate the fuel savings over time alongside the minimum inter-vehicle distances. For our simulation we consider $\ell$-consecutive packet losses where $\ell = 1$, that is, half of the packets are unable to be delivered.

Figure~\ref{SummaryFuelvsSafety} illustrates the minimum inter-vehicle distances $d'_\mathrm{min}$ versus the average fuel saved for all vehicles in the vehicle platoon for various combinations of 2 control parameters $k_\mathrm{p}$ and $k_\mathrm{d}$. In Figure~\ref{SummaryFuelvsSafety}, markers of the same color represent the same value of $k_\mathrm{d}$, and markers of the same shape represent the same value of $k_\mathrm{p}$. As observed in Figure~\ref{SummaryFuelvsSafety}, there are 3 distinct trade-off lines between safety and fuel efficiency, mainly influenced by the control parameter $k_\mathrm{p}$. Specifically, a larger $k_\mathrm{p}$ with the same value of $k_\mathrm{d}$ leads the inter-vehicle distances $d'_\mathrm{min}$ to be greater so that the vehicle platoon becomes safer, while it decreases the value of average fuel saved. On the other hand, Figure~\ref{SummaryFuelvsSafety} also reveals that for this particular scenario, decreasing the value of $k_\mathrm{d}$ while keeping $k_\mathrm{p}$ constant can enhance fuel efficiency without compromising the safety of the vehicle platoon. However, we cannot forever decrease the value of $k_\mathrm{d}$, as it may violate the string stability condition $k_\mathrm{d}>\tau_\mathrm{d}k_\mathrm{p}$.

We provide numerical values of average fuel savings in Table~\ref{table1}. In the same table, we also provide error bounds. These error bounds are obtained by using the right-hand side of the inequalities provided in (\ref{eq:fuel_bound_single}) of Theorem~\ref{thm:fuel_error_single_step}. We observe that the bound is around the level of $15\%$. This is consistent across all parameter values, indicating that the largest simulation error is guaranteed to be below the $15\%$ of the numerical value obtained.

\begin{table}[]
\centering
\renewcommand{\arraystretch}{1.3}
\caption{Fuel Saving Simulation Results}
\begin{tabular}{llcc}
\toprule
$\boldsymbol{k_{\mathrm{p}}}$ & $\boldsymbol{k_{\mathrm{d}}}$
& \textbf{Avg. Fuel Savings [g/s]}
& \textbf{Error Bound [g/s]} \\
\midrule \midrule
0.1 & 1.0 & 0.4124 & 0.06063 \\
0.1 & 1.25 & 0.4052 & 0.06021 \\
0.1 & 1.5 & 0.4014 & 0.06002 \\
0.15 & 1.0 & 0.4102 & 0.06046 \\
0.15 & 1.25 & 0.4028 & 0.05996 \\
0.15 & 1.5 & 0.3989 & 0.05972 \\
0.2 & 1.0 & 0.4070 & 0.06026 \\
0.2 & 1.25 & 0.4005 & 0.05978 \\
0.2 & 1.5 & 0.3969 & 0.05954 \\
0.25 & 1.0 & 0.4031 & 0.06000 \\
0.25 & 1.25 & 0.3981 & 0.05960 \\
0.25 & 1.5 & 0.3951 & 0.05938 \\ 
\bottomrule
\end{tabular} \label{table1}
\end{table}

\section{Conclusion} \label{sec:Conclusion}
We investigated safety and fuel efficiency in vehicle platoons through a new simulation framework. Our proposed simulation framework allows reliable computation of the inter-vehicle distances across time and provides guaranteed bounds on the error induced by simulation. Using this framework, we first explored safety performance of different control parameters under sudden braking and communication losses. We then used our simulation framework for fuel efficiency analysis. 

In our safety simulations, we observed that random packet losses can be critical in that packet loss probability beyond a threshold can cause a sharp drop in safety. Regarding fuel efficiency simulations, we observed a safety and fuel-efficiency tradeoff driven by the control parameter $k_p$. When the other control parameter ($k_d$) is fixed, increasing $k_p$ increases the amount of saved fuel, but can cause a decrease in minimum inter-vehicle distance.

For future research we also aim to incorporate our time-discretized simulation approach in traffic and vehicle simulation environments such as SUMO (Simulation of Urban MObility) platform \cite{sumo}, CARLA \cite{carla}, and NVIDIA Drive Sim \cite{nvidia_drive}.


\section*{Appendix} \label{sec:Appendix}
\begin{proof} Since $d_{i}(t)=q_{i}^{\top}x(t)-L_{i},$ by Lemma~\ref{Lemma-key}
and (\ref{eq:dt}), we have 
\begin{align*}
d_{i}(t)-d_{i}(t_{k}) & =q_{i}^{\top}(e^{A_{{\mathrm{c}}}(t-t_{k})}-I_{{\mathrm{c}}})x(t_{k})\\
 & \,\,+q_{i}^{\top}\int_{0}^{t-t_{k}}e^{A_{{\mathrm{c}}}(t-t_{k}-\tau)}{\mathrm{d}}\tau B_{{\mathrm{c}}}u(t_{k}),
\end{align*}
 where $I_{{\mathrm{c}}}\in{\mathbb{R}}^{(3+6n)\times(3+6n)}$ denotes
the identity matrix. Thus, by using Cauchy-Schwarz inequality
\begin{align*}
|d_{i}(t)-d_{i}(t_{k})| & =|q_{i}^{\top}(e^{A_{{\mathrm{c}}}(t-t_{k})}-I_{{\mathrm{c}}})x(t_{k})\\
 & \quad+q_{i}^{\top}\int_{0}^{t-t_{k}}e^{A_{{\mathrm{c}}}(t-t_{k}-\tau)}{\mathrm{d}}\tau B_{{\mathrm{c}}}u(t_{k})|\\
 & \leq\|q_{i}\|\|(e^{A_{{\mathrm{c}}}(t-t_{k})}-I_{{\mathrm{c}}})x(t_{k})\\
 & \quad\quad+\int_{0}^{t-t_{k}}e^{A_{{\mathrm{c}}}(t-t_{k}-\tau)}{\mathrm{d}}\tau B_{{\mathrm{c}}}u(t_{k})\|,
\end{align*}
which implies by triangle inequality that 
\begin{align}
|d_{i}(t)-d_{i}(t_{k})|\ & \leq\|q_{i}\|\Big(\|e^{A_{{\mathrm{c}}}(t-t_{k})}-I_{{\mathrm{c}}}\|\|x(t_{k})\|\nonumber \\
 & \quad+\int_{0}^{t-t_{k}}\|e^{A_{{\mathrm{c}}}(t-t_{k}-\tau)}\|{\mathrm{d}}\tau\|B_{{\mathrm{c}}}u(t_{k})\|\Big).\label{eq:d-ineq-1}
\end{align}
Now, note that
\begin{align}
&\|e^{A_c(t-t_k)}-I_\mathrm{c}\| = \left\|\int_{0}^{t-t_k}A_\mathrm{c} e^{A_\mathrm{c}s}\right\|\mathrm{d}s \nonumber \\
&\quad \leq \int_{0}^{t-t_k}\|A_\mathrm{c}\|\|e^{A_\mathrm{c}s}\|\mathrm{d}s = \|A_\mathrm{c}\| \int_{0}^{t-t_k}\|e^{A_\mathrm{c}s}\|\mathrm{d}s. \nonumber \\
&\quad \leq \|A_\mathrm{c}\| \int_{0}^{t-t_k}e^{\|A_{\mathrm{c}}\|s}\mathrm{d}s = e^{\|A_{\mathrm{c}}\|(t-t_k)}-1. 
\label{eq:norm-inequality-D}
\end{align}
Similarly, $\|e^{A_{{\mathrm{c}}}(t-t_{k}-\tau)}\|\leq e^{\|A_{{\mathrm{c}}}\|(t-t_{k}-\tau)}$.
Therefore, (\ref{eq:d-ineq-1}) implies
\begin{align}
 & |d_{i}(t)-d_{i}(t_{k})|\nonumber \\
 & \,\,\leq\|q_{i}\|\Big(\left(e^{\|A_{{\mathrm{c}}}\|(t-t_{k})}-1\right)\|x(t_{k})\|\nonumber \\
 & \,\,\quad\quad+\int_{0}^{t-t_{k}}e^{\|A_{{\mathrm{c}}}\|(t-t_{k}-\tau)}{\mathrm{d}}\tau\|B_{{\mathrm{c}}}u(t_{k})\|\Big)\nonumber \\
 & \,\,=\|q_{i}\|\Big(\left(e^{\|A_{{\mathrm{c}}}\|(t-t_{k})}-1\right)\|x(t_{k})\|\nonumber \\
 & \,\,\quad\quad+\frac{e^{\|A_{{\mathrm{c}}}\|(t-t_{k})}-1}{\|A_{{\mathrm{c}}}\|}\|B_{{\mathrm{c}}}u(t_{k})\|\Big)\nonumber \\
 & \,\,=\|q_{i}\|\left(\|x(t_{k})\|+\frac{\|B_{{\mathrm{c}}}u(t_{k})\|}{\|A_{{\mathrm{c}}}\|}\right)\left(e^{\|A_{{\mathrm{c}}}\|(t-t_{k})}-1\right).\label{eq:d-good-ineq}
\end{align}
Furthermore, since for $t\in[t_{k},t_{k+1})$, we have $(t-t_{k})\leq t_{k+1}-t_{k}$.
By (\ref{eq:tk-condition1}), we obtain 
\begin{align}
e^{\|A_{{\mathrm{c}}}\|(t-t_{k})} & \leq e^{\|A_{{\mathrm{c}}}\|(t_{k+1}-t_{k})}\nonumber \\
 & \leq e^{\|A_{{\mathrm{c}}}\|\ln\left(\frac{\alpha}{2(\|x(t_{k})\|+\frac{\|B_{{\mathrm{c}}}u(t_{k})\|}{\|A_{{\mathrm{c}}}\|})}+1\right)/\|A_{{\mathrm{c}}}\|}\nonumber \\
 & =\frac{\alpha}{\sqrt{2}(\|x(t_{k})\|+\frac{\|B_{{\mathrm{c}}}u(t_{k})\|}{\|A_{{\mathrm{c}}}\|})}+1\nonumber \\
 & =\frac{\alpha}{\|q_{i}\|(\|x(t_{k})\|+\frac{\|B_{{\mathrm{c}}}u(t_{k})\|}{\|A_{{\mathrm{c}}}\|})}+1,\label{eq:emu-ineq}
\end{align}
where in the last equality we used $\|q_{i}\|=\sqrt{2}$ that follows from
(\ref{eq:vij-def}). It then follows from (\ref{eq:d-good-ineq})
and (\ref{eq:emu-ineq}) that 
\begin{align*}
|d_{i}(t)-d_{i}(t_{k})| & \leq\|q_{i}\|\frac{\alpha}{\|q_{i}\|}=\alpha,
\end{align*}
which completes the proof. \end{proof}

\bibliographystyle{elsarticle-num}
\bibliography{references}

@article{hall2005,
  author  = {Hall, R. and Chin, C.},
  title   = {Vehicle Sorting for Platoon Formation: Impacts on Highway Entry and Throughput},
  journal = {Transportation Research Part C: Emerging Technologies},
  volume  = {13},
  number  = {5},
  pages   = {405--420},
  year    = {2005}
}

@article{maiti2017,
  author  = {Maiti, S. and Winter, S. and Kulik, L.},
  title   = {A Conceptualization of Vehicle Platoons and Platoon Operations},
  journal = {Transportation Research Part C: Emerging Technologies},
  volume  = {80},
  pages   = {1--19},
  year    = {2017}
}

@inproceedings{sidorenko2020,
  author    = {Sidorenko, G. and Thunberg, J. and Sj{\"o}berg, K. and Vinel, A.},
  title     = {Vehicle-to-Vehicle Communication for Safe and Fuel-Efficient Platooning},
  booktitle = {2020 IEEE Intelligent Vehicles Symposium (IV)},
  pages     = {795--802},
  year      = {2020}
}

@article{xiao2022,
  author  = {Xiao, S. and Ge, X. and Han, Q.-L. and Zhang, Y.},
  title   = {Resource-Efficient Platooning Control of Connected Automated Vehicles over {VANET}s},
  journal = {IEEE Transactions on Intelligent Vehicles},
  volume  = {7},
  number  = {3},
  pages   = {579--589},
  year    = {2022}
}

@article{hou2023,
  author  = {Hou, J. and Chen, G. and Huang, J. and Qiao, Y. and Xiong, L. and Wen, F. and Knoll, A. and Jiang, C.},
  title   = {Large-Scale Vehicle Platooning: Advances and Challenges in Scheduling and Planning Techniques},
  journal = {Engineering},
  volume  = {28},
  pages   = {26--48},
  year    = {2023}
}

@article{alam2014,
  author  = {Alam, A. and Gattami, A. and Johansson, K. H. and Tomlin, C. J.},
  title   = {Guaranteeing Safety for Heavy Duty Vehicle Platooning: Safe Set Computations and Experimental Evaluations},
  journal = {Control Engineering Practice},
  volume  = {24},
  pages   = {33--41},
  year    = {2014}
}

@misc{JapanTruckingAssociation2024,
  author = {{Japan Trucking Association}},
  title  = {Traffic Accident Statistics from {J}anuary to {D}ecember 2024},
  year   = {2025}
}

@article{stankovic2000,
  author  = {Stankovic, S. and Stanojevic, M. and Siljak, D.},
  title   = {Decentralized Overlapping Control of a Platoon of Vehicles},
  journal = {IEEE Transactions on Control Systems Technology},
  volume  = {8},
  number  = {5},
  pages   = {816--832},
  year    = {2000}
}

@inproceedings{huang2019,
  author    = {Huang, Z. and Zhuang, W. and Yin, G. and Xu, L. and Luo, K.},
  title     = {Cooperative Merging for Multiple Connected and Automated Vehicles at Highway On-Ramps via Virtual Platoon Formation},
  booktitle = {2019 Chinese Control Conference (CCC)},
  pages     = {6709--6714},
  year      = {2019}
}

@inproceedings{liang2015,
  author    = {Liang, K.-Y. and Deng, Q. and M{\aa}rtensson, J. and Ma, X. and Johansson, K. H.},
  title     = {The Influence of Traffic on Heavy-Duty Vehicle Platoon Formation},
  booktitle = {2015 IEEE Intelligent Vehicles Symposium (IV)},
  pages     = {150--155},
  year      = {2015}
}

@article{ploeg2015,
  author  = {Ploeg, J. and van de Wouw, N. and Nijmeijer, H.},
  title   = {Fault Tolerance of Cooperative Vehicle Platoons Subject to Communication Delay},
  journal = {IFAC-PapersOnLine},
  volume  = {48},
  number  = {12},
  pages   = {352--357},
  year    = {2015}
}

@article{liu2022,
  author  = {Liu, R. and Ren, Y. and Yu, H. and Li, Z. and Jiang, H.},
  title   = {Connected and Automated Vehicle Platoon Maintenance under Communication Failures},
  journal = {Vehicular Communications},
  volume  = {35},
  year    = {2022}
}

@article{loi2024,
  author  = {Liu, H. and Chu, D. and Zhong, W. and Gao, B. and Lu, Y. and Han, S. and Lei, W.},
  title   = {Compensation Control of Commercial Vehicle Platoon Considering Communication Delay and Response Lag},
  journal = {Computers and Electrical Engineering},
  volume  = {119},
  pages   = {109623},
  year    = {2024}
}

@article{dolk2017,
  author  = {Dolk, V. S. and Ploeg, J. and Heemels, W. M. H.},
  title   = {Event-Triggered Control for String-Stable Vehicle Platooning},
  journal = {IEEE Transactions on Intelligent Transportation Systems},
  volume  = {18},
  number  = {12},
  pages   = {3486--3500},
  year    = {2017}
}

@book{griffiths2010numerical,
  title={{Numerical Methods for Ordinary Differential Equations: Initial Value Problems}},
  author={Griffiths, David Francis and Higham, Desmond J.},
  volume={5},
  year={2010},
  publisher={Springer}
}

@article{acciani2022,
  author  = {Acciani, F. and Frasca, P. and Heijenk, G. and Stoorvogel, A. A.},
  title   = {Stochastic String Stability of Vehicle Platoons via Cooperative Adaptive Cruise Control with Lossy Communication},
  journal = {IEEE Trans. Intell. Transp. Syst.},
  volume  = {23},
  number  = {8},
  pages   = {10912--10922},
  year    = {2022}
}

@article{hespanha2007,
  author  = {Hespanha, J. P. and Naghshtabrizi, P. and Xu, Y.},
  title   = {A Survey of Recent Results in Networked Control Systems},
  journal = {Proc. IEEE},
  volume  = {95},
  number  = {1},
  pages   = {138--172},
  year    = {2007}
}

@article{cetinkaya2016tac,
  author  = {Cetinkaya, A. and Ishii, H. and Hayakawa, T.},
  title   = {Networked Control under Random and Malicious Packet Losses},
  journal = {IEEE Trans. Autom. Control},
  volume  = {62},
  number  = {5},
  pages   = {2434--2449},
  year    = {2017}
}

@article{Deng2023,
  author  = {Deng, Z. and Yang, K. and Shen, W. and Shi, Y.},
  title   = {Cooperative Platoon Formation of Connected and Autonomous Vehicles: Toward Efficient Merging Coordination at Unsignalized Intersections},
  journal = {IEEE Trans. Intell. Transp. Syst.},
  volume  = {24},
  number  = {5},
  pages   = {5625--5639},
  year    = {2023},
  doi     = {10.1109/TITS.2023.3235774}
}

@article{surur2024,
  author  = {Surur, K. R. and Alshaksh, A. and Mysorewala, M. F.},
  title   = {Merging Maneuver Technique for an Adaptive Autonomous Platoon in Multi-Lane Highways},
  journal = {IEEE Access},
  year    = {2024}
}

@article{chen2025,
  author  = {Chen, J. and Zhu, B. and Zhang, M. and Ling, X. and Ruan, X. and Deng, Y. and Guo, N.},
  title   = {Multi-Agent Deep Reinforcement Learning Cooperative Control Model for Autonomous Vehicle Merging into Platoon in Highway},
  journal = {World Electric Vehicle Journal},
  volume  = {16},
  number  = {4},
  pages   = {225},
  year    = {2025}
}

@article{Shen2024,
  author  = {Shen, Z. and Liu, Y. and Li, Z. and Wu, Y.},
  title   = {Distributed Vehicular Platoon Control Considering Communication Delays and Packet Dropouts},
  journal = {Journal of the Franklin Institute},
  volume  = {361},
  number  = {7},
  pages   = {106703},
  year    = {2024},
  doi     = {10.1016/j.jfranklin.2024.106703}
}

@article{Wu2025,
  author  = {Wu, Y. and Zhao, X. and Shen, Z. and Liu, Y. and Li, Z.},
  title   = {Stability Analysis of Connected Vehicle Platoons with {Markovian} Packet Losses},
  journal = {IEEE Internet of Things Journal},
  year    = {2025}
}

@article{Zhao2025,
  author  = {Zhao, F. and Li, H. and Wang, J. and Li, J.},
  title   = {String Stability Based Cloud Predictive Control of Vehicle Platoon With Random Time Delay},
  journal = {IEEE Trans. Veh. Technol.},
  year    = {2025}
}

@inproceedings{sumo,
          title = {Microscopic Traffic Simulation using {SUMO}},
         author = {Pablo Alvarez Lopez and Michael Behrisch and Laura Bieker-Walz and Jakob Erdmann and Yun-Pang Fl{\"o}tter{\"o}d and Robert Hilbrich and Leonhard L{\"u}cken and Johannes Rummel and Peter Wagner and Evamarie Wie{\ss}ner},
      publisher = {IEEE},
      booktitle = {The 21st IEEE International Conference on Intelligent Transportation Systems},
           year = {2018},
        journal = {IEEE Intelligent Transportation Systems Conference (ITSC)},
       keywords = {traffic simulation, modelling, optimization},
            url = {https://elib.dlr.de/124092/}
 }

@inproceedings{carla,
  title = { {CARLA}: {An} Open Urban Driving Simulator},
  author = {Alexey Dosovitskiy and German Ros and Felipe Codevilla and Antonio Lopez and Vladlen Koltun},
  booktitle = {Proceedings of the 1st Annual Conference on Robot Learning},
  pages = {1--16},
  year = {2017}
}

@misc{nvidia_drive,
  author       = {{NVIDIA Corporation}},
  title        = {{NVIDIA DRIVE Sim}},
  howpublished = {\url{https://developer.nvidia.com/drive/simulation}},
  note         = {[Online]}
}

@misc{volvo_demo_2018,
author ={{Volvo Group}},
title={{Volvo Trucks and FedEx demonstrate Truck Platooning}},
howpublished={{[Online]. Retrieved December 10, 2025: \url{https://www.volvogroup.com/en/news-and-media/news/2018/jun/news-2971141.html}}},
year = {2018}
}

@misc{daimler_tokyo_2018,
  author       = {{Automotive Testing Technology International}},
  title        = {Daimler Evaluates Truck Platooning on {T}okyo Roads},
 howpublished = {[Online]. Retrieved December 10, 2025: \url{https://www.automotivetestingtechnologyinternational.com/news/adas-cavs/daimler-evaluates-truck-platooning-tokyo.html}},
  year         = {2018}
}

@inproceedings{chen2025sice,
    author = {Yuhao Chen and Ahmet Cetinkaya},
    title = {Time-Discretized Simulation Approach for Safety Analysis of Vehicle Platoons Under Sudden Braking and Communication Losses},
    booktitle = {Proc. SICE International Symposium on Control Systems (ISCS)},
    year = {2025},
    pages = {125-131}
}

@book{ode1985,
  title     = {Ordinary Differential Equations: An Elementary Textbook for Students of Mathematics, Engineering, and the Sciences},
  author    = {Tenenbaum, Morris and Pollard, Harry},
  year      = {1985},
  publisher = {Dover Publications},
  address   = {New York},
  isbn      = {9780486649405}
}

@article{soderlind2006logarithmic,
  title={The logarithmic norm. {H}istory and modern theory},
  author={S{\"o}derlind, Gustaf},
  journal={BIT Numerical Mathematics},
  volume={46},
  number={3},
  pages={631--652},
  year={2006},
  publisher={Springer}
}

@article{xiong2007stabilization,
  title={Stabilization of linear systems over networks with bounded packet loss},
  author={Xiong, Junlin and Lam, James},
  journal={Automatica},
  volume={43},
  number={1},
  pages={80--87},
  year={2007}
}

@INPROCEEDINGS{Sang2023,
  author={Sang, Xiao and Bai, Linhan and Zhou, Tao and Zheng, Fangfang},
  booktitle={2023 7th International Conference on Transportation Information and Safety (ICTIS)}, 
  title={A Cooperative Truck Platooning Approach on Ramp Merging Area of Highway in Mixed Traffic}, 
  year={2023},
  volume={},
  number={},
  pages={1186-1193},
  keywords={Road transportation;Integer programming;System performance;Transportation;Stochastic processes;Stability analysis;Real-time systems;truck platooning;trajectory optimization;mixed integer programming model;rolling horizon algorithm},
  doi={10.1109/ICTIS60134.2023.10243740}}

@ARTICLE{Gao2025,
  author={Gao, Bolin and Wang, Zhou and Zhong, Wei and Chu, Duanfeng and Huang, Jingrui and Zhang, Dong and Zhang, Ronghui and Hu, Chuan and Li, Keqiang},
  journal={IEEE Transactions on Vehicular Technology}, 
  title={Cloud-Based Centralized Platoon Predictive Cruise Control Considering Fuel-Efficient in Highway}, 
  year={2025},
  volume={74},
  number={11},
  pages={16753-16767},
  keywords={Planning;Cloud computing;Fuels;Stability analysis;Cruise control;Roads;Optimization;Thermal stability;Prediction algorithms;Energy consumption;Vehicle platoon;cloud control system;centralized architecture;predictive cruise control (PCC);rolling distance domain dynamic programming (RDP)},
  doi={10.1109/TVT.2025.3577432}}

@INPROCEEDINGS{Pandey2025,
  author={Pandey, Kaushlendra and Dhillon, Harpreet S. and Gupta, Abhishek K.},
  booktitle={2025 IEEE 26th International Workshop on Signal Processing and Artificial Intelligence for Wireless Communications (SPAWC)}, 
  title={Load Distribution Analysis of Platooned Vehicular Networks on a Highway}, 
  year={2025},
  volume={},
  number={},
  pages={1-5},
  keywords={Road transportation;Wireless communication;Geometry;Graphical models;Vehicle-to-infrastructure;Road side unit;Stochastic processes;Signal processing;Vehicle dynamics;Load modeling;Platooned traffic;stochastic geometry;Poisson point process;Matérn cluster process;load distribution},
  doi={10.1109/SPAWC66079.2025.11143400}}

@article{Martinez2024,
  author={Martínez-Díaz, Margarita and Al-Haddad, Christelle and Soriguera, Francesc and Antoniou, Constantinos},
  journal={IEEE Transactions on Intelligent Transportation Systems}, 
  title={Impacts of Platooning of Connected Automated Vehicles on Highways}, 
  year={2024},
  volume={25},
  number={7},
  pages={6366-6396},
  keywords={Automation;Road transportation;Transportation;Safety;Companies;Automobiles;Traffic control;Platooning;platooning impacts;automation;cooperative driving environments;highways},
  doi={10.1109/TITS.2024.3350776}}

@inproceedings{lyubenov2014,
  title={A study of the ABS influence on vehicles deceleration},
  author={Lyubenov, DL and Kadikyanov, GP},
  booktitle={Scientific Proceedings XXII International Scientific-Technical Conference “trans \& MOTAUTO},
  volume={14},
  pages={44--46},
  year={2014}
}

@article{barhoumi2025,
  title={Fuel consumption in platoons: A literature review},
  author={Barhoumi, Oumaima and Farhani, Ghazal and Rahman, Taufiq and Zaki, Mohamed H and Tahar, Sofi{\`e}ne and Araji, Fadi},
  journal={arXiv preprint arXiv:2508.10891},
  year={2025}
}

@article{hussein2021,
  title={Vehicle platooning impact on drag coefficients and energy/fuel saving implications},
  author={Hussein, Ahmed A and Rakha, Hesham A},
  journal={IEEE Transactions on Vehicular Technology},
  volume={71},
  number={2},
  pages={1199--1208},
  year={2021},
  publisher={IEEE}
}

\end{document}